\newtheorem{theorem}{Theorem}
\newtheorem{lemma}[theorem]{Lemma}
\newtheorem{observation}[theorem]{Observation}
\def\AA{{\cal A}}
\def\BB{{\cal B}}
\def\ff{f_{\bullet}}
\def\ttq{\mathbf{t}^{q}}
\newcommand{\NN}{\mathbb{N}}
\newcommand{\PP}{\mathcal{P}}
\newcommand{\perm}{\mathfrak{S}}
\newcommand{\permutons}{\mathfrak{P}}
\newcommand{\Jac}{\operatorname{Jac}}
\newcommand{\Mon}{\operatorname{Mon}}
\newcommand{\Ind}{\operatorname{Occ}}
\newcommand{\Hom}{\operatorname{Hom}}
\newcommand{\mon}{\operatorname{mon}}
\newcommand{\simplif}{\mathcal{R}}
\begin{document}

\title{Densities in large permutations and parameter testing}

\author{
Roman Glebov\thanks{School of Computer
Science and Engineering, Hebrew University, Jerusalem, 9190401. E-mail: {\tt roman.l.glebov@gmail.com}.}\and 
Carlos Hoppen\thanks{Instituto de Matem\'atica, UFRGS -- Avenida Bento Gon\c{c}alves, 9500, 91509-900, Porto Alegre, RS, Brazil. E-mail: {\tt choppen@ufrgs.br}.}\and 
Tereza Klimo\v sov\' a\thanks{Department of Applied Mathematics, Faculty of Mathematics and Physics, Charles University,  
Malostransk\'e n\'am\v{e}st\'i 25, 118 00 Praha 1, Czech Republic. E-mail: {\tt tereza@kam.mff.cuni.cz}.}\and Yoshiharu Kohayakawa\thanks{Instituto de Matem\'atica e Estat\'\i stica, USP -- Rua do Mat\~ao
  1010, 05508--090 S\~ao Paulo, SP, Brazil. E-mail: {\tt yoshi@ime.usp.br}.}\and  
Daniel Kr\' al'\thanks{Mathematics Institute, DIMAP and Department of Computer Science, University of Warwick, Coventry CV4 7AL, UK. E-mail: {\tt d.kral@warwick.ac.uk}.}\and 
Hong Liu\thanks{Mathematics Institute and DIMAP, University of Warwick, Coventry CV4 7AL, UK. E-mail: {\tt h.liu.9@warwick.ac.uk}.}}

\maketitle

\begin{abstract}
A classical theorem of Erd\H os, Lov\'asz and Spencer asserts that
the densities of connected subgraphs in large graphs are independent.
We prove an analogue of this theorem for permutations and
we then apply the methods used in the proof to give an example of a finitely
approximable permutation parameter that is not finitely forcible.
The latter answers a question posed by two of the authors and Moreira and Sampaio.
\end{abstract}

\section{Introduction}
Computer science applications that involve large networks form one of the main motivations
to develop methods for the analysis of large graphs.
The theory of graph limits, which emerged in a series of papers
by Borgs, Chayes, Lov\'asz, S\'os, Szegedy and Vesztergombi~\cite{bib-borgs06+,bib-borgs+,bib-borgs08+,bib-lovasz06+},
gives analytic tools to cope with problems related to large graphs.
It also provides an analytic view of many standard concepts,
e.g. the regularity method~\cite{bib-lovasz07} or property testing algorithms~\cite{bib-hoppen-test,bib-lovasz10+}.
In this paper, we focus on another type of discrete objects, permutations, and
we give permutation counterparts of some of classical results on large graphs.
It is worth noting that not all results on large graphs have permutation analogues and vice versa
as demonstrated, for example, by the finite forcibility of graphons and permutons~\cite{bib-ff-perm} (vaguely speaking, finite forcibility means that a global structure is determined by finitely many substructure densities).

Both our main results are related to the dependence of possible densities of (small) substructures.
In the case of graphs, Erd\H os, Lov\'asz and Spencer~\cite{bib-ELS} considered three notions of substructure
densities: the subgraph density, the induced subgraph density and the homomorphism density.
They showed that these types of densities in a large graph are strongly related and that the densities of connected graphs are independent in the sense that none of the densities can be expressed as a function of the others.
The result has a natural formulation in the language of graph limits, which are called graphons:
the body of possible densities of any $k$ connected graphs in graphons,
which is a subset of $[0,1]^k$, has a non-empty interior (in particular, it is full dimensional).

Our first result asserts that the analogous statement is also true for permutations.
As in the case of graphs, it is natural to cast our result in terms of permutation limits, called permutons.
The theory of permutation limits was initiated in~\cite{bib-hoppen-lim, bib-hoppen-lim2} (also see~\cite{bib-presutti}) and successfully applied e.g.~in~\cite{bib-hoppen-test,bib-kral-pikhurko}.
To state our first result, we use the notion of a {\em indecomposable} permutation, which is an analogue of graph connectivity in the sense that an indecomposable permutation cannot be split into {\em independent parts}. Let $T^q$ be the body of possible densities of indecomposable permutations of order at most $q$ in a permuton (a precise definition and further details can be found in Section~\ref{sec:basics}). Our first result says that $T^q$ has a non-empty interior for every $q$. In particular, it contains $B(\mathbf{w},\varepsilon)$, for some vector $\mathbf{w}$ and some $\varepsilon>0$, where $B(\mathbf{w},\varepsilon)$ denotes the ball of radius $\varepsilon$ around $\mathbf{w}$.

\begin{theorem}\label{thm:ball}
For every integer $q\geq 2$, there exist a vector $\mathbf{w}\in T^{q}$ and $\varepsilon>0$ such that $B(\mathbf{w},\varepsilon)\subseteq T^{q}$.
\end{theorem}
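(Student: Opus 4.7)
Since any convex combination of permutons is again a permuton, the set $T^{q}\subseteq\mathbb{R}^{N}$ is convex, where $N$ denotes the number of indecomposable permutations of order at most $q$. Consequently, to show that $T^{q}$ has non-empty interior it suffices to construct a smooth $N$-parameter family $(\Pi_{\mathbf{t}})_{\mathbf{t}\in U}$ of permutons on an open neighbourhood $U$ of some base point $\mathbf{t}_{0}$ such that the density map $F\colon\mathbf{t}\mapsto(d(\sigma_{i},\Pi_{\mathbf{t}}))_{i=1}^{N}$ has non-singular Jacobian at $\mathbf{t}_{0}$. The inverse function theorem then guarantees that $F(U)\subseteq T^{q}$ contains an open ball around $\mathbf{w}=F(\mathbf{t}_{0})$, which is the desired conclusion.

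I would construct such a family by substitution (inflation) into a sufficiently generic base permuton $\Pi_{0}$. Enumerate the indecomposable permutations $\sigma_{1},\dots,\sigma_{N}$ of order at most $q$ in non-decreasing order of size, choose $N$ pairwise disjoint small subregions of $\Pi_{0}$, and let the $i$-th coordinate $t_{i}$ control a local substitution in the $i$-th region that replaces it by a scaled copy of the step permuton associated with $\sigma_{i}$. Each density $d(\sigma_{j},\Pi_{\mathbf{t}})$ is then a polynomial in $\mathbf{t}$; its Taylor coefficients can be expanded combinatorially by summing over the ways in which the $|\sigma_{j}|$ sample points distribute between the inserted blocks and the background. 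The combinatorial heart of the argument is to exploit indecomposability: whenever the sample points of a copy of $\sigma_{j}$ straddle a single inserted block (some inside, some outside), the block's boundary aligns with a non-trivial decomposition of $\sigma_{j}$, contradicting the indecomposability of $\sigma_{j}$. Combined with the trivial observation that the single-point marginals of every permuton are uniform, this cancellation isolates the $\sigma_{i}$-dependent contribution of block $i$ and makes it vanish whenever $|\sigma_{j}|>|\sigma_{i}|$. Sorting the coordinates by size and reparametrizing appropriately (so that the leading non-vanishing effect of each block is captured at a common scale) therefore yields a block-triangular leading-order Jacobian.

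The main technical obstacle will be to verify this structure rigorously. One has to keep careful track of mixed cross-block contributions---copies of $\sigma_{j}$ that sample points from several different inserted blocks simultaneously---and show that these appear at strictly higher order in $\mathbf{t}$ than the diagonal terms one isolates; one also has to ensure that within each size layer the diagonal block of the Jacobian is non-singular, which will require choosing $\Pi_{0}$ in sufficiently general position so that the densities $d(\sigma_{j},\cdot)$ respond linearly independently to the substitutions of distinct indecomposable permutations $\sigma_{i}$ of the same size as $\sigma_{j}$. Once invertibility of the leading-order Jacobian has been established, the inverse function theorem delivers the required vector $\mathbf{w}\in T^{q}$ and radius $\varepsilon>0$ with $B(\mathbf{w},\varepsilon)\subseteq T^{q}$.
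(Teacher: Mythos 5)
Your high-level strategy (a smooth $r$-parameter family of permutons whose density map has non-singular Jacobian, then the inverse function theorem) is the same as the paper's, but the two claims that are supposed to make it work do not hold for the construction you describe. The assertion that sample points of an indecomposable $\sigma_j$ cannot straddle an inserted block is false when the blocks are placed inside a generic base permuton $\Pi_0$: indecomposability here only forbids a prefix $[m]$ being mapped onto $[m]$ (a splitting along the diagonal); it does not forbid non-trivial intervals. For instance $(2,3,1)$ is indecomposable yet contains the interval $\{1,2\}$, so a copy of it can use two points inside an inserted block and one point outside without producing any decomposition. The cancellation you invoke is only valid when the inserted blocks sit along the diagonal of a direct sum, which is exactly the paper's construction (Observation~\ref{obs:t-composed}). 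Relatedly, the claimed vanishing of block~$i$'s effect on the density of $\sigma_j$ whenever $|\sigma_j|>|\sigma_i|$ is also false: by Observation~\ref{obs:t-step-up}, a step-up permuton built from $\sigma_i$ contains larger indecomposable patterns as soon as $\sigma_j$ admits a non-trivial compressive partition (e.g.\ $(2,3,1)$ has positive density in the step-up permuton of $(2,1)$), so the block-triangular leading-order Jacobian you are counting on does not materialize. Your opening claim that $T^q$ is convex because a convex combination of permutons is a permuton is also unfounded (densities are degree-$|\tau|$ polynomials in the measure, not affine functions of it), though nothing later actually uses convexity.

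The second, and more serious, gap is that even granting a triangular structure, the non-singularity of the diagonal blocks --- that densities ``respond linearly independently'' to substitutions of distinct indecomposable permutations of the same order --- is precisely the hard content of the theorem, and you defer it to choosing $\Pi_0$ ``in sufficiently general position'' with no argument. The paper supplies this ingredient as Lemma~\ref{lemma:span}: if some non-trivial linear relation $\sum_i c_i t(\tau_i,\Phi)=0$ held for all permutons, one plugs in step-up permutons $\Phi^{\mathbf{x}}_{\tau_k}$ of a maximal-order $\tau_k$ with $c_k\neq 0$ and isolates the multilinear monomial $x_1\cdots x_s$, whose coefficient is $c_k s!$ because $\Ind(\tau',\tau_k)=\emptyset$ for every other $\tau'$ of the same order. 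With that lemma the paper then bypasses all perturbative bookkeeping: it takes permutons $\Phi_1,\dots,\Phi_r$ whose density vectors span $\mathbb{R}^r$, forms the diagonal direct sum $\bigoplus_i(x_i,\Phi_i)$, obtains the exact formula $\Psi_j(\mathbf{x})=\sum_i x_i^{|\tau_j|}t(\tau_j,\Phi_i)$ from indecomposability, and notes that the Jacobian is a non-zero polynomial since at the all-ones point it equals $\bigl(\prod_j|\tau_j|\bigr)\det V\neq 0$. To repair your proof you would either have to replace your insertion scheme by this diagonal direct sum (at which point you recover the paper's argument) or give an independent proof of the linear-independence step you currently assume.
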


Our second result is related to algorithms for large permutations.
Such algorithms are counterparts of extensively studied graph property testing, see e.g. ~\cite{bib-alon09+, bib-alon08mono+, bib-goldreich96+, bib-goldreich03+, bib-rodl85+}.
In the case of permutations, two of the authors and Moreira and Sampaio~\cite{bib-hoppen-test-SODA, bib-hoppen-test}
established that every hereditary permutation property is testable with respect to the rectangular distance and two of the other authors~\cite{bib-soda} strengthened the result to testing with respect to Kendall's tau distance.
In addition to property testing,
a related notion of parameter testing was also considered in~\cite{bib-hoppen-test} where  testable bounded permutation parameters were characterized.

However, the interplay between testing and the finite forcibility of permutation parameters
was not fully understood in~\cite{bib-hoppen-test}.
In particular, the authors asked~\cite[Question 5.5]{bib-hoppen-test} 
whether there exists a testable bounded permutation parameter that is not finitely forcible.
Our second result gives a positive answer to this question.

\begin{theorem}\label{thm:main}
There exists a bounded permutation parameter $f$ that is finitely approximable but not finitely forcible.
\end{theorem}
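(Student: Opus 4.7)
My plan is to take $f$ to be an infinite weighted sum of indecomposable permutation densities, rapidly convergent enough that every finite truncation is trivially finitely forcible while the whole sum is not; Theorem~\ref{thm:ball} will be the tool that witnesses independence across arbitrarily many of the coordinates appearing in the sum.

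Fix an enumeration $\sigma_{1},\sigma_{2},\ldots$ of the indecomposable permutations with $|\sigma_{k}|$ non-decreasing (so $|\sigma_{k}|\to\infty$, since only finitely many indecomposables have any given size), and set
\[
f(W)\;=\;\sum_{k\ge 1}2^{-k}\,d(\sigma_{k},W),
\]
where $d(\sigma,W)$ denotes the density of $\sigma$ in the permuton $W$ in the sense of Section~\ref{sec:basics}. Boundedness ($0\le f\le 1$) is immediate. For finite approximability, observe that the truncation $f_{n}(W)=\sum_{k\le n}2^{-k}\,d(\sigma_{k},W)$ is a linear combination of finitely many densities and is therefore itself finitely forcible, while $|f-f_{n}|\le 2^{-n}$ uniformly in $W$. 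Hence $f$ is a uniform limit of finitely forcible parameters, which is exactly what it means to be finitely approximable.

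For non-forcibility, suppose toward a contradiction that the densities of some finite family $\mathcal{F}$ of permutations determine $f$, and set $q_{0}=\max_{\pi\in\mathcal{F}}|\pi|$. Choose $k_{0}$ with $|\sigma_{k_{0}}|>q_{0}$ and put $q=|\sigma_{k_{0}}|$. Theorem~\ref{thm:ball} supplies $\mathbf{w}\in T^{q}$ and $\varepsilon>0$ with $B(\mathbf{w},\varepsilon)\subseteq T^{q}$. I aim to exhibit permutons $W_{1},W_{2}$ whose indecomposable density vectors lie in $B(\mathbf{w},\varepsilon)$ and satisfy (a)~$d(\pi,W_{1})=d(\pi,W_{2})$ for every $\pi\in\mathcal{F}$ and (b)~$d(\sigma_{k_{0}},W_{1})\neq d(\sigma_{k_{0}},W_{2})$. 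Arranging the perturbation so that the $\sigma_{k_{0}}$-contribution is not cancelled by the remaining terms then yields $f(W_{1})\neq f(W_{2})$, contradicting the assumption that $\mathcal{F}$ determines $f$.

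The main obstacle is clause (a): the densities of the \emph{decomposable} members of $\mathcal{F}$ are not coordinates of $T^{q}$, so merely moving along the $\sigma_{k_{0}}$-axis inside $B(\mathbf{w},\varepsilon)$ need not preserve them. The plan is to combine the erasing identities from Section~\ref{sec:basics} (which express each $d(\pi,W)$ with $|\pi|\le q_{0}$ as a fixed linear combination of size-$q$ densities) with the many independent directions inside $B(\mathbf{w},\varepsilon)$ corresponding to indecomposable permutations of sizes between $q_{0}+1$ and $q-1$: the latter provide enough degrees of freedom to compensate, coordinate by coordinate, for any side-effect on the finitely many $\mathcal{F}$-densities produced by the $\sigma_{k_{0}}$-perturbation. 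Making this compensation rigorous---i.e.\ verifying that the relevant Jacobian from ``allowed directions in $B(\mathbf{w},\varepsilon)$'' to ``changes of $\mathcal{F}$-densities'' has full rank once $q$ is taken large enough---reuses the perturbation machinery underlying Theorem~\ref{thm:ball} and is where the bulk of the technical work will lie.
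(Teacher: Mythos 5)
Your construction and the paper's are superficially similar (an infinite weighted sum of densities of permutations of growing order, with approximability by truncation), but the non-forcibility half of your argument has two genuine gaps. First, Theorem~\ref{thm:ball} is not enough to produce the pair $W_1,W_2$ you need. A point of $T^{q}$ records only the densities of \emph{indecomposable} permutations of order at most $q$, and two permutons with identical $\ttq$-vectors can still differ on decomposable permutations; there are no ``erasing identities'' expressing $d(\pi,W)$ for $|\pi|\le q_0$ as a function of the $T^{q}$-coordinates (the correct identities expand small densities over \emph{all} permutations of order $q$, not just indecomposable ones). So choosing two points in $B(\mathbf{w},\varepsilon)$ differing only in the $\sigma_{k_0}$-coordinate says nothing about clause~(a), and your proposed ``compensation along other indecomposable directions'' cannot repair this, because the quantity you must control is simply not a function of those coordinates. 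This is exactly why the paper does not deduce Theorem~\ref{thm:main} from Theorem~\ref{thm:ball} directly but proves a separate statement, Lemma~\ref{lemma:indep}, giving for \emph{any} finite family of permutations two permutons that agree on it and differ on some later indecomposable permutation; its proof needs a concrete parametrized family (direct sums of step-up permutons of thorough indecomposable permutations) and a Borsuk--Ulam argument, since the Jacobian information your sketch hopes for is not available.

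Second, the fixed weights $2^{-k}$ make the ``no cancellation'' step unprovable even granting $W_1,W_2$ with properties (a) and (b). The terms $j<k_0$ with $|\sigma_j|\le q_0$ but $\sigma_j\notin\mathcal{F}$ are uncontrolled and carry weights much larger than $2^{-k_0}$, and even if you matched all of those, the tail satisfies $\sum_{j>k_0}2^{-j}=2^{-k_0}$, which can swamp the contribution $2^{-k_0}\,|d(\sigma_{k_0},W_1)-d(\sigma_{k_0},W_2)|$ since the latter gap may be arbitrarily small and is in any case at most $1$. The paper handles both issues at once: it uses Lemma~\ref{lemma:indep} to get, for each $k$, permutons $\Phi_k,\Phi'_k$ agreeing on \emph{all} permutations of order at most $|\tau_{k-1}|$ (so every earlier term of the sum cancels exactly) with a definite gap $\gamma_k$ at $\tau_k$, and only then chooses the weights adaptively, requiring $\sum_{i>k}\alpha_i<\alpha_k\gamma_k/4$, so the tail provably cannot cancel the $k$-th term. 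To salvage your approach you would have to prove a statement of the strength of Lemma~\ref{lemma:indep} and let the weights depend on the resulting gaps, at which point you have reproduced the paper's proof; there is also the minor point that $f$ must in the end be a parameter on $\perm$, so the permuton-level discrepancy has to be transferred to large $\Phi$-random permutations, as the paper does.
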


Informally speaking, we utilize the methods used in the proof of Theorem~\ref{thm:ball} to construct a permutation parameter that oscillates on indecomposable permutations, with bounded amplitude, so that the parameter testable though it fails to be finitely forcible.

\section{Preliminaries}

In this section, we introduce the notions used throughout the paper.
Most of our notions are standard but we include all of them
for the convenience of the reader.

\subsection{Permutations}\label{sec:basics}

A {\em permutation of order $n$} is a bijective mapping from $[n]$ to $[n]$, where $[n]$ denotes the set $\{1,\ldots,n\}$. 
The order of a permutation $\sigma$ is denoted by $|\sigma|$. We say a permutation is {\em non-trivial} if it has order greater than $1$. We denote by $S_n$ the set of all permutations of order $n$ and let  $\perm=\bigcup_{n\in\mathbb{N}}S_n$.
An {\em inversion} of a permutation $\sigma$ is a pair $(i,j)$, $i,j\in[|\sigma|]$, such that $i<j$ and $\sigma(i)>\sigma(j)$. An {\em interval} $I$ in $[m]$ is a set of integers of the form $\{k \mid a\leq k\leq b\}$ for some $a,b\in [m]$. An interval $I$ is {\em proper} if $a<b$ and $I\neq[m]$.

We say that a permutation $\sigma$ of order $n$ is {\em indecomposable} if there is no $1\leq m<n$ such that $\sigma([m])=[m]$. Note that 
\begin{eqnarray}\label{eq:conn}
\Pr_{\sigma\in S_n}(\sigma \mbox{ is not indecomposable})&\leq&\frac{\sum_{m=1}^{n-1} m!(n-m)!}{n!}=\sum_{m=1}^{n-1}\binom{n}{m}^{-1}\\
&\leq& \frac{2}{n}+\sum_{m=2}^{n-2}\binom{n}{m}^{-1}\leq \frac{2}{n}+(n-3)\frac{2}{n(n-1)}.\nonumber
\end{eqnarray}
Thus, $\lim_{n\rightarrow \infty}\Pr_{\sigma\in S_n}(\sigma \mbox{ is indecomposable})=1$.

We say that a permutation $\sigma$ is {\em simple} if it does not map any proper interval onto an interval. For example the permutation $(\sigma(1),\ldots,\sigma(4))=(2,4,1,3)$ is simple. 

Albert, Atkinson and Klazar~\cite{bib-klazar} showed that a random permutation is simple with a probability bounded away from zero. Specifically, they proved the following.
\begin{equation}\label{eq:simple}
\lim_{n\rightarrow \infty}\mathbb{P}_{\sigma\in S_n}(\sigma \mbox{ is simple})=e^{-2}.
\end{equation}

Let $\pi$ be a permutation of order $k$ and $\sigma$ a permutation of order $n$.
We introduce three ways in which $\pi$ can appear in $\sigma$: as a subpermutation, through a monomorphism and through a homomorphism. We say that $\pi$ is a {\em subpermutation of $\sigma$} if there exists a strictly increasing function $f:[k]\rightarrow[n]$, such that $\pi(i)>\pi(j)$ if and only if $\sigma(f(i))>\sigma(f(j))$ for every $i,j\in [k]$. We then say that $f([k])$ {\em induces} a subpermutation $\pi$ in $\sigma$. Let  $\Ind(\pi,\sigma)$ be the set of all such functions $f$ from $[k]$ into $[n]$ and let $\Lambda(\pi,\sigma)=|\Ind(\pi,\sigma)|$. The {\em density} of $\pi$ in $\sigma$ is defined as
\begin{equation*}
t(\pi,\sigma)=  
\left\{ \begin{array}{ll}
\Lambda(\pi,\sigma) \binom{n}{k}^{-1} & \mbox{ if }k\leq n\mbox{ and }\\
0 & \mbox{ otherwise.}\\
\end{array}\right.
\end{equation*}
A non-decreasing function $f:[k]\rightarrow[n]$ is a {\em homomorphism} of $\pi$ to $\sigma$ if $\sigma(f(i))>\sigma(f(j))$ for every $i,j\in [k]$ such that  $i<j$ and $\pi(i)>\pi(j)$, that is, $f$ preserves inversions.
A {\em monomorphism} is a homomorphism that is injective.

Let $\Hom(\pi,\sigma)$ and $\Mon(\pi,\sigma)$ be the sets of homomorphisms and monomorphisms of $\pi$ to $\sigma$, respectively, and let $\Lambda_{\hom}(\pi,\sigma)$ and $\Lambda_{\mon}(\pi,\sigma)$ denote the sizes of the respective sets. Note that $\Ind(\pi,\sigma)\subseteq \Mon(\pi,\sigma)\subseteq \Hom(\pi,\sigma)$. The  {\em homomorphism density} $t_{\hom}$ and {\em monomorphism density} $t_{\mon}$ are defined as follows:
\begin{align*}
t_{\mon}(\pi,\sigma)&=  
\left\{ \begin{array}{ll}
\Lambda_{\mon}(\pi,\sigma)\binom{n}{k}^{-1} & \mbox{ if }k\leq n\mbox{ and }\\
0 & \mbox{ otherwise,}\\
\end{array}\right.\\
t_{\hom}(\pi,\sigma)&=\Lambda_{\hom}(\pi,\sigma)\binom{n+k-1}{k}^{-1}.
\end{align*}
The three densities that we have just introduced are analogues of the induced subgraph density, homomorphism density and subgraph density  for graphs studied in~\cite{bib-ELS}.

Let $q$ be an integer and let $\{\tau_1,\ldots, \tau_r\}$ be the set of all non-trivial indecomposable permutations of order at most $q$. We consider the following three vectors
\begin{align*}
\ttq(\sigma)&=(t(\tau_1,\sigma),\ldots,t(\tau_r,\sigma)),\\
\ttq_{\mon}(\sigma)&=(t_{\mon}(\tau_1,\sigma),\ldots,t_{\mon}(\tau_r,\sigma))\mbox{, and}\\
\ttq_{\hom}(\sigma)&=(t_{\hom}(\tau_1,\sigma),\ldots,t_{\hom}(\tau_r,\sigma)).
\end{align*}
Our aim is to understand possible densities of subpermutations in large permutations. This leads to the following definitions, which reflect the possible asymptotic densities of the indecomposable permutations of order at most $q$ in permutations:
\begin{align*}
T^{q}&=\{\mathbf{v}\in\mathbb{R}^r\mid \exists (\sigma_n)_{n=1}^{\infty} \mbox{ such that } \ttq(\sigma_n)\rightarrow \mathbf{v}\mbox{ and }|\sigma_n|\rightarrow \infty\},\\
T^{q}_{\mon}&=\{\mathbf{v}\in\mathbb{R}^r\mid \exists (\sigma_n)_{n=1}^{\infty} \mbox{ such that } \ttq_{\mon}(\sigma_n)\rightarrow \mathbf{v}\mbox{ and }|\sigma_n|\rightarrow \infty\}\mbox{, and}\\
T^{q}_{\hom}&=\{\mathbf{v}\in\mathbb{R}^r\mid \exists (\sigma_n)_{n=1}^{\infty} \mbox{ such that } \ttq_{\hom}(\sigma_n)\rightarrow \mathbf{v}\mbox{ and }|\sigma_n|\rightarrow \infty\}.
\end{align*}

In Section~\ref{sec:lim}, we will see that $T^{q}$ and $T^{q}_{\mon}$ have another, simpler definition in language of permutons. Now we give three observations on how the sets $T^{q}$, $T^{q}_{\mon}$ and $T^{q}_{\hom}$ relate to each other.

\begin{observation}\label{obs:same}
The sets $T^{q}_{\mon}$ and $T^{q}_{\hom}$ are equal for every $q\in \mathbb{N}$.
\end{observation}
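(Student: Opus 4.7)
The plan is to show that for each fixed non-trivial permutation $\pi$ of order $k$, the monomorphism and homomorphism densities agree asymptotically, namely
\begin{equation*}
t_{\mon}(\pi,\sigma) - t_{\hom}(\pi,\sigma) \to 0 \quad \text{as } |\sigma| \to \infty.
\end{equation*}
Once this is established, equality of the two sets is immediate: the definitions of $T^{q}_{\mon}$ and $T^{q}_{\hom}$ quantify over the same class of sequences $(\sigma_n)$ with $|\sigma_n|\to\infty$, so if $\ttq_{\mon}(\sigma_n)\to\mathbf{v}$ then also $\ttq_{\hom}(\sigma_n)\to\mathbf{v}$, and symmetrically.

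For the key estimate, set $n=|\sigma|$ and consider a homomorphism $f\colon[k]\to[n]$ that fails to be a monomorphism. Since $f$ is non-decreasing, its image $\{y_1<\cdots<y_{k'}\}$ has some size $k'<k$, and $f$ is uniquely specified by this image together with a partition of $[k]$ into $k'$ nonempty consecutive intervals (the preimages of the $y_r$). There are $\binom{n}{k'}$ choices for the image and $\binom{k-1}{k'-1}$ choices for the partition, so
\begin{equation*}
\Lambda_{\hom}(\pi,\sigma) - \Lambda_{\mon}(\pi,\sigma) \leq \sum_{k'=1}^{k-1}\binom{n}{k'}\binom{k-1}{k'-1} = O(n^{k-1}),
\end{equation*}
with the implied constant depending only on $k$. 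Combining this with $\binom{n}{k}=\Theta(n^k)$ and the elementary fact that $\binom{n+k-1}{k}/\binom{n}{k}\to 1$ as $n\to\infty$, one obtains
\begin{equation*}
t_{\hom}(\pi,\sigma) = \frac{\Lambda_{\mon}(\pi,\sigma) + O(n^{k-1})}{\binom{n+k-1}{k}} = t_{\mon}(\pi,\sigma)\cdot\frac{\binom{n}{k}}{\binom{n+k-1}{k}} + o(1) = t_{\mon}(\pi,\sigma) + o(1),
\end{equation*}
where the $o(1)$ error depends only on $k$.

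Applying this bound to each of the permutations $\tau_1,\ldots,\tau_r$ gives $\ttq_{\hom}(\sigma_n)-\ttq_{\mon}(\sigma_n)\to 0$ whenever $|\sigma_n|\to\infty$, and the observation follows. There is no serious obstacle here; the only point requiring care is that $t_{\mon}$ and $t_{\hom}$ use different normalizing denominators, but the ratio of these denominators tends to $1$ for fixed $k$, so the discrepancy between the two densities is a lower-order term.
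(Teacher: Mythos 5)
Your proof is correct and follows essentially the same route as the paper: bound the number of homomorphisms that fail to be monomorphisms by $O(n^{k-1})$ (the paper uses the cruder bound $\binom{k}{2}n^{k-1}$ by choosing a colliding pair, you count by image size), conclude that $t_{\mon}$ and $t_{\hom}$ agree up to an error tending to $0$ uniformly in $\sigma$, and hence that the two limit sets coincide. Your explicit treatment of the differing normalizations $\binom{n}{k}$ versus $\binom{n+k-1}{k}$ is a detail the paper leaves implicit, but it is the same argument.
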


\begin{proof}
Observe that for every fixed integer $k$, 
\[\Lambda_{\hom}(\tau,\sigma)-\Lambda_{\mon}(\tau,\sigma)\le \binom{k}{2}{n}^{k-1}=O(n^{k-1}),\]
for every $\sigma$ of order $n$ and $\tau$ of order $k$.

Hence, for every permutation $\tau$ and every real $\varepsilon>0$ there exists $n_0$ such that $|t_{\mon}(\tau,\sigma)-t_{\hom}(\tau,\sigma)|<\varepsilon$ for every permutation $\sigma$ with $|\sigma|>n_0$. The statement now follows.
\end{proof}

In view of Observation~{\ref{obs:same}, we will discuss only $T^{q}_{\mon}$ in the rest of the paper.

\begin{observation}\label{obs:closed}
For every $q\in\mathbb{N}$, the set $T^{q}_{\mon}$ is closed.
\end{observation}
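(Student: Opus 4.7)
The plan is to use a standard diagonal argument. Let $(\mathbf{v}_k)_{k=1}^{\infty}$ be any sequence of vectors in $T^{q}_{\mon}$ converging to some $\mathbf{v}\in\mathbb{R}^{r}$; the goal is to exhibit a single sequence of permutations $(\tau_k)_{k=1}^{\infty}$ with $|\tau_k|\to\infty$ such that $\mathbf{t}^{q}_{\mon}(\tau_k)\to\mathbf{v}$, which places $\mathbf{v}$ in $T^{q}_{\mon}$ and proves closedness.

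For each $k$, by definition of $T^{q}_{\mon}$ there is a sequence $(\sigma^{(k)}_n)_{n=1}^{\infty}$ of permutations with $|\sigma^{(k)}_n|\to\infty$ and $\mathbf{t}^{q}_{\mon}(\sigma^{(k)}_n)\to\mathbf{v}_k$. Hence, for every $k$, we may pick an index $n_k$ large enough so that the permutation $\tau_k:=\sigma^{(k)}_{n_k}$ simultaneously satisfies $|\tau_k|\ge k$ and $\|\mathbf{t}^{q}_{\mon}(\tau_k)-\mathbf{v}_k\|<1/k$, where $\|\cdot\|$ denotes any fixed norm on $\mathbb{R}^r$.

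Then $|\tau_k|\to\infty$ and, by the triangle inequality,
\[
\|\mathbf{t}^{q}_{\mon}(\tau_k)-\mathbf{v}\|\le \|\mathbf{t}^{q}_{\mon}(\tau_k)-\mathbf{v}_k\|+\|\mathbf{v}_k-\mathbf{v}\|<\frac{1}{k}+\|\mathbf{v}_k-\mathbf{v}\|\longrightarrow 0,
\]
so $\mathbf{t}^{q}_{\mon}(\tau_k)\to\mathbf{v}$, as required. There is no real obstacle here: the argument is purely a diagonal selection using only the definition of $T^{q}_{\mon}$ as a set of limit points, and it does not rely on any property of permutations beyond the existence of the approximating sequences. (The same proof applies verbatim to $T^{q}$, and via Observation~\ref{obs:same} to $T^{q}_{\hom}$ as well.)
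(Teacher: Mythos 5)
Your proof is correct and is essentially the paper's own argument: a diagonal selection, choosing for each $k$ a permutation of large order whose density vector is within $1/k$ of $\mathbf{v}_k$, and concluding by the triangle inequality. Your explicit requirement $|\tau_k|\ge k$ just makes precise a step the paper leaves implicit.
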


\begin{proof}
Consider a convergent sequence $(\mathbf{w}_{n})_{n\in\mathbb{N}}\subseteq T^{q}_{\mon}$ and let $\mathbf{w}=\lim_{n\rightarrow\infty}\mathbf{w}_n$. For each $n$, choose $\sigma_{n}$ such that $\|\ttq_{\mon}(\sigma_n)-\mathbf{w}_n\|\leq 1/n$. Observe that $\ttq_{\mon}(\sigma_n)$ converges to $\mathbf{w}$.
\end{proof}

\begin{observation}\label{obs:lin-transf}
The set $T^{q}$ is a non-singular linear transformation of $T^{q}_{\mon}$  for every $q\in \mathbb{N}$.
\end{observation}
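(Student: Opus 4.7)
The plan is to exhibit a non-singular linear map $M$ satisfying $\ttq_{\mon}(\sigma)=M\,\ttq(\sigma)$ for every permutation $\sigma$; this gives $T^{q}_{\mon}=M(T^{q})$, and hence $T^{q}=M^{-1}(T^{q}_{\mon})$, which is the claim.

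The first step is a combinatorial identity relating $\Lambda_{\mon}$ to $\Lambda$. Write $\pi\preceq\pi'$ when $\pi,\pi'\in S_{k}$ have the same order $k$ and the inversion set of $\pi$ is contained in that of $\pi'$. A monomorphism $f\colon[k]\to[n]$ is strictly increasing, so it is determined by its image $S=f([k])$; the defining condition of a monomorphism forces the subpermutation $\pi'$ induced on $S$ by $\sigma$ to satisfy $\pi\preceq\pi'$, and conversely each such $S$ yields exactly one monomorphism. Thus
\[\Lambda_{\mon}(\pi,\sigma)=\sum_{\pi'\in S_{k}\colon\,\pi\preceq\pi'}\Lambda(\pi',\sigma).\]

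The second step is to verify that indecomposability is preserved under $\preceq$. If $\pi$ is indecomposable, then for every $1\leq m<k$ the fact that $\pi(\{1,\ldots,m\})\neq\{1,\ldots,m\}$ forces some $i\leq m<j$ with $\pi(i)>m\geq\pi(j)$, yielding an inversion of $\pi$ crossing position $m$. This inversion persists in any $\pi'$ with $\pi\preceq\pi'$, so no such $\pi'$ can have a decomposition point, and $\pi'$ is indecomposable. Dividing the identity above by $\binom{n}{k}$ and specialising $\pi$ to each $\tau_{i}$ gives $\ttq_{\mon}(\sigma)=M\,\ttq(\sigma)$ for the $0/1$ matrix $M$ with $M_{ij}=1$ iff $|\tau_{i}|=|\tau_{j}|$ and $\tau_{i}\preceq\tau_{j}$.

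Finally, I would check that $M$ is non-singular. Since $\preceq$ only compares permutations of equal order, $M$ is block-diagonal indexed by $k\in\{2,\ldots,q\}$. Within each block, ordering the $\tau_{j}$'s of that order by non-decreasing number of inversions makes $M$ upper triangular with unit diagonal, so $\det M=1$. The only delicate point is the preservation of indecomposability under $\preceq$; the remaining steps are routine counting and linear algebra.
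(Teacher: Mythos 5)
Your proof is correct and follows essentially the same route as the paper: the identity $\Lambda_{\mon}(\pi,\sigma)=\sum_{\pi'\succeq\pi}\Lambda(\pi',\sigma)$ over patterns with larger inversion set, followed by triangularity (unit diagonal) of the resulting matrix under an ordering by number of inversions. The one point where you go beyond the paper's write-up is the explicit check that $\pi\preceq\pi'$ preserves indecomposability, which the paper leaves implicit but which is indeed needed for the matrix to be indexed by the $\tau_i$ alone; your argument for it is valid.
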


\begin{proof}
Note that $\Lambda_{\mon}(\pi,\sigma)=\sum_{\pi'\in \mathcal{P}}\Lambda(\pi',\sigma)$, where $\mathcal{P}$ is a set of permutations $\pi'$ of the same order as $\pi$ such that the identity mapping is a monomorphism from $\pi$ to $\pi'$. Consequently, $t_{\mon}(\pi,\sigma)=\sum_{\pi'\in \mathcal{P}}t(\pi',\sigma)$. This gives that $T^q_{\mon}$ is a linear transformation of $T^q$. 
Observe that if we order $\tau_1,\dots, \tau_r$ by the number of inversions, the coefficient matrix of the induced linear mapping is upper triangular with diagonal entries equal to $1$. We conclude that the linear transformation of
$T^q$ is non-singular.
\end{proof}

\subsection{Permutation limits}\label{sec:lim}

In this subsection, we survey the theory of permutation limits,
which was introduced in~\cite{bib-hoppen-lim,bib-hoppen-lim2} (a similar representation was used in~\cite{bib-presutti}).
We follow the terminology used in~\cite{bib-kral-pikhurko}. 
An infinite sequence $(\sigma_i)_{i\in\mathbb{N}}$ of permutations with $|\sigma_i|\to\infty$ is {\em convergent} if $t(\tau,\sigma_i)$ converges for every permutation $\tau\in\perm$. Observe that every sequence of permutations has a convergent subsequence. A convergent sequence can be associated with an analytic limit object, a {\em permuton}. A {\em permuton} is a probability measure $\Phi$ on the $\sigma$-algebra of Borel sets of the unit square $[0,1]^2$ such that $\Phi$ has {\em uniform marginals}, i.e., $\Phi\left(\left[\alpha,\beta\right]\times [0,1]\right)=\Phi\left([0,1]\times[\alpha,\beta]\right)=\beta-\alpha$ for every $0\le\alpha\le\beta\le 1$. We denote the set of all permutons by $\permutons$.

Given a permuton $\Phi$, a {\em $\Phi$-random permutation of order $n$} is a permutation $\sigma_{\Phi,n}$ obtained in the following way. Sample $n$ points $(x_1,y_1),\ldots,(x_n,y_n)$ in $[0,1]^2$ at random with the distribution given by $\Phi$. Note that the values of $x_i$ are pairwise distinct with probability one and the same holds for the values of $y_i$. Let $i_1,\ldots,i_n\in [n]$ be such that $x_{i_1}<x_{i_2}<\cdots<x_{i_n}$. Then the permutation $\sigma_{\Phi,n}$ is the unique bijective mapping from $[n]$ to $[n]$ satisfying that $\sigma_{\Phi,n}(j)<\sigma_{\Phi,n}(j')$ if and only if $y_{i_j}<y_{i_{j'}}$ for every $j,j'\in[n]$. Informally speaking, the values  $x_i$ determine the ordering of the points and the relative order of the values $y_i$ determines the relative order of the elements of the permutation.

If $\Phi$ is a permuton and $\sigma$ is a permutation of order $n$, then $t(\sigma,\Phi)$ is the probability that a $\Phi$-random permutation of order $n$ is $\sigma$. We say that a  permuton $\Phi$ is a {\em limit} of a convergent sequence of permutations $(\sigma_i)_{i\in\mathbb{N}}$ if \[\lim\limits_{i\rightarrow\infty}t(\tau,\sigma_i)=t(\tau,\Phi)\] for every $\tau\in \perm$. Every convergent sequence of permutations has a limit and the permuton representing the limit of a convergent sequence
of permutations is unique.

We now give some examples of the notions we have just defined (the corresponding permutons are depicted
in Figure~\ref{fig-permuton}).
Let us consider a sequence $\left(\pi^1_i\right)_{i\in \NN}$ such that $\pi^1_i$ is the identity permutation of order $i$,
i.e., $\pi^1_i(k)=k$ for $k\in [i]$. This sequence is convergent and its limit is the permuton $I$ with support $\left\{(x,x), x\in [0,1]\right\}$ and measure uniformly distributed on its support.
Similarly, the limit of a sequence $\left(\pi^2_i\right)_{i\in \NN}$, where $\pi^2_i$ is the permutation of order $i$
defined as $\pi^2_i(k)=i+1-k$ for $k\in [i]$, is the permuton with support $\left\{(x,1-x), x\in [0,1]\right\}$ and measure uniformly distributed on its support.

\begin{figure}
\begin{center}
\includegraphics[scale=1]{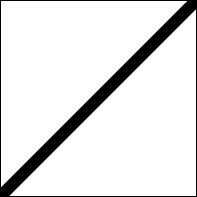} \hskip 10mm
\includegraphics[scale=1]{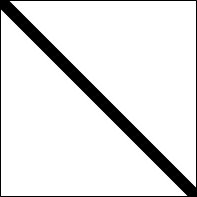} \hskip 10mm
\includegraphics[scale=1]{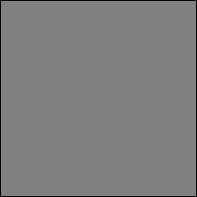}
\end{center}
\caption[The limits of permutation sequences.]{The limits of sequences $\left(\pi^1_i\right)_{i\in \NN}$, $\left(\pi^2_i\right)_{i\in \NN}$, $\left(\pi^3_i\right)_{i\in \NN}$.}\label{fig-permuton}
\end{figure}

Next, consider a sequence $(\pi^3_i)_{i\in \NN}$ such that $\pi^3_i$ is a uniformly random permutation of order $i$.
This sequence is convergent with probability one and its limit is the uniform probability measure on $[0,1]^2$ with probability one. 

Similarly to the subpermutation density, we can define the monomorphism density of a permutation $\tau$ in a permuton $\Phi$ as the probability that the identity mapping to a random $\Phi$-permutation is a monomorphism of $\tau$. Since we view permutons as representing large permutations, if we defined
homomorphism densities in a natural way, they would coincide with
monomorphism densities. So, we restrict our study to subpermutation
densities and monomorphism densities in permutons. By analogy to the finite case, we define the vectors
\begin{align*}
\ttq(\Phi)&=(t(\tau_1,\Phi),\ldots,t(\tau_r,\Phi))\mbox{ and }\\
\ttq_{\mon}(\Phi)&=(t_{\mon}(\tau_1,\Phi),\ldots,t_{\mon}(\tau_r,\Phi)),
\end{align*}
where $q\in \mathbb{N}$ and $\{\tau_1,\ldots, \tau_r\}$ is the set of all non-trivial indecomposable permutations of order at most $q$.

If $\Phi$ is a permuton and $\sigma_i$ is a $\Phi$-random permutation of order $i$, then the sequence $(\sigma_i)_{i\in\mathbb{N}}$ is convergent with probability one and $\Phi$ is its limit. In particular, this means that for every finite set of permutations $\mathcal{P}$ and every $\varepsilon>0$, there exists a permutation $\varphi$ such that $|t(\pi,\Phi)-t(\pi,\varphi)|<\varepsilon$ for every $\pi\in \mathcal{P}$. This yields an alternative description of $T^{q}$ as the set $\{\ttq(\Phi)\mid \Phi\in \permutons\}$.
Similarly, $T^{q}_{\mon}=\{\ttq_{\mon}(\Phi)\mid \Phi\in \permutons\}$.

\subsection{Permuton constructions}
In this section we introduce constructions of {\em step-up} permutons and a {\em direct sum} of permutons, which we use in Section~\ref{sec:Tq}, and we derive formulas for densities of indecomposable subpermutations in the constructed permutons.

The step-up permutons are permutons with simple structure corresponding to a weighted permutation. They are defined as follows. Let $\sigma$ be a permutation of order $n$ and let $\mathbf{v}=(v_1,\ldots, v_{n})\in \mathbb{R}_{+}^{n}$ be such that $\sum_{i\in[n]}v_i\leq 1$, where $\mathbb{R}_{+}$ is the set of positive reals. The {\em step-up permuton} of $\sigma$ and $\mathbf{v}$ is the permuton $\Phi_{\sigma}^{\mathbf{v}}$ such that the support of the measure $\Phi_{\sigma}^{\mathbf{v}}$ is formed by the segments between the points 
$(\sum_{j< i} v_j,\sum_{\sigma(j)<\sigma(i)} v_j)$ and $(\sum_{j\leq i} v_j,\sum_{\sigma(j)\leq \sigma(i)} v_j)$ for $i\in[n]$ and the segment between the points $(\sum_{j=1}^n v_j,\sum_{j=1}^n v_j)$ and $(1,1)$. Note that this uniquely determines the permuton $\Phi_{\sigma}^{\mathbf{v}}$ because it must have uniform marginals. See Figure~\ref{fig:step-up} for an example.

\begin{figure}
\begin{center}
\includegraphics[scale=0.5]{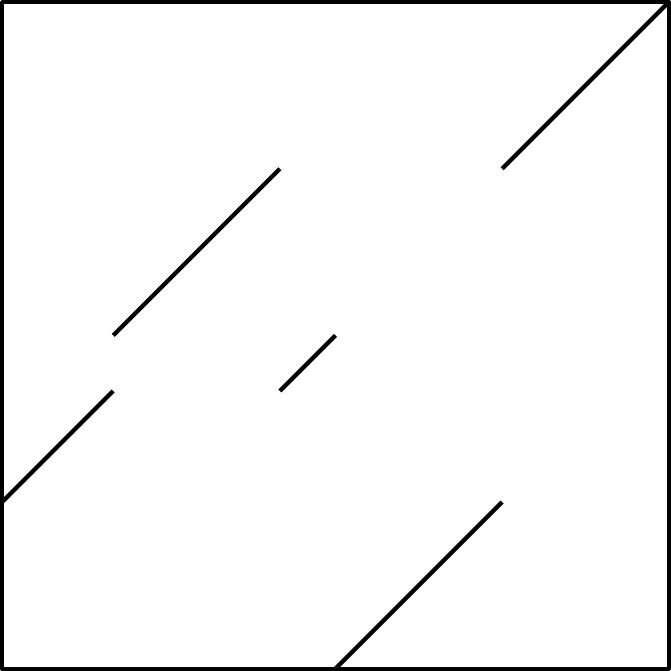}
\end{center}
\caption{The permuton $\Phi_{\sigma}^{\mathbf{v}}$ for $\sigma= (2,4,3,1)$ and $\mathbf{v}=(1/6,1/4,1/12,1/4)$.}\label{fig:step-up}
\end{figure}

\begin{figure}
\begin{center}
\includegraphics[scale=1]{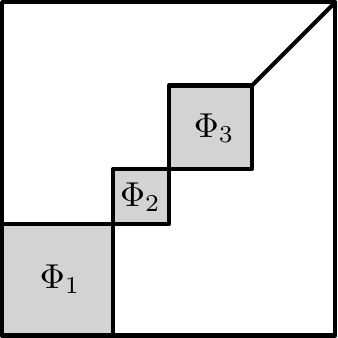}
\end{center}
\caption{The permuton $(1/3,\Phi_1)\oplus(1/6,\Phi_2)\oplus(1/4,\Phi_3)$.}\label{fig:oplus}
\end{figure}

We now define the direct sum of permutons with weights. For $k\in \NN$, a sequence of permutons $(\Phi_{i})_{i\in[k]}$ and $(p_{i})_{i\in[k]}\in \mathbb{R}_{+}^k$ such that $\sum_{i\in[k]} p_i\leq 1$, the {\em direct sum of permutons} $\Phi_i$ with weights $p_i$ is denoted by $\Phi=\bigoplus_{i\in[k]} (p_i,\Phi_{i})$ and is defined as follows:

\[\Phi(S)=\sum_{i=1}^{k+1} p_i\Phi_i(\theta_i(S\cap C_i))\] for every Borel set $S$, where $\Phi_{k+1}=I$ (the first permuton in Figure~\ref{fig-permuton}), $p_{k+1}=1-\sum_{i=1}^kp_i$,
\begin{equation}\label{eq:c}
C_i=\left[\sum_{j=1}^{i-1}p_j,\sum_{j=1}^{i}p_j\right]^2
\end{equation} 
and
$\theta_i$ is a map from $C_i$ to $[0,1]^2$  defined as
\[\theta_i((x,y))=\left(\frac{x-\sum_{j=1}^{i-1}p_j}{p_i}, \frac{y-\sum_{j=1}^{i-1}p_j}{p_i}\right)\] for every $i\in [k+1]$. See Figure~\ref{fig:oplus} for an example.

For a permutation $\tau$ of order $k$, we call an ordered partition $\PP=(P_1, \ldots, P_{\ell})$ of~$[k]$ a {\em $\tau$-compressive partition} if 
\begin{itemize}
\item $P_i$ is an interval for every $i\in [\ell]$,
\item $a<b$ for every $a\in P_i$ and $b\in P_j$ with $i<j$, and
\item for every $i\in[\ell]$, there exists an integer $c_i$, such that $\tau(a)=a+c_i$ for every $a\in P_i$. (In particular, $\tau(P_i)$ is an interval for every $i\in [\ell]$.)
\end{itemize}

We denote the set of all $\tau$-compressive partitions by $\simplif(\tau)$. Note that for every permutation $\tau$, there exist at least one $\tau$-compressive partition: the partition into singletons.

For a permutation $\tau$ of order $k$ and a $\tau$-compressive partition $\PP=(P_1, \ldots, P_{\ell})$, let $\tau/\PP$ be a subpermutation of $\tau$ of order $\ell$ induced by $\{a_1, \ldots,  a_{\ell}\}$ where $a_i\in P_i$ for every $i\in[\ell]$. 
Note that  $\tau/\PP$ is unique, in particular, it is independent of the choice of the elements $a_i$.

In other words, the permutation $\tau/\PP$ is a permutation that can be obtained from $\tau$ by shrinking each interval $P_i$ and its image into single points, without changing the relative order of the elements of the permutation. For instance, $\PP=(\{1,2\},\{3\},\{4,5\})$ is a $(4,5,1,2,3)$-compressive partition, with $(4,5,1,2,3)/\PP=(3,1,2)$.

\begin{observation}\label{obs:t-step-up}
Let $\tau$ be a non-trivial indecomposable permutation of order $k$, $\sigma$ a permutation of order $n\geq k$ and let $\mathbf{p}=(p_1,\ldots, p_n)\in \mathbb{R}_{+}^n$ be such that $\sum_{i\in [n]} p_i\leq 1$. 
It follows that 

\[ t(\tau,\Phi_{\sigma}^{\mathbf{p}})=k!\sum_{\PP\in\simplif(\tau)}
\sum_{\psi\in \Ind(\tau/\PP, \sigma)}\prod_{i=1}^{|\PP|} \frac{p^{|P_i|}_{\psi(i)}}{|P_i|!}\;.\]
\end{observation}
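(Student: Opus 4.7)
My plan is to compute $t(\tau,\Phi_\sigma^{\mathbf{p}})$ directly from its sampling definition, as the probability that $k$ points drawn independently from $\Phi_\sigma^{\mathbf{p}}$ form the permutation $\tau$. The support of $\Phi_\sigma^{\mathbf{p}}$ is a disjoint union of $n$ slope-one segments of masses $p_1,\dots,p_n$, one inside each box $[\sum_{j<i}p_j,\sum_{j\le i}p_j]\times[\sum_{\sigma(j)<\sigma(i)}p_j,\sum_{\sigma(j)\le\sigma(i)}p_j]$, together with a slope-one ``tail'' segment of mass $1-\sum_{i}p_i$ in the top-right corner. Thus each sample point falls in segment $i\in[n]$ with probability $p_i$ and in the tail with probability $1-\sum_i p_i$, independently of the others.

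The key reduction is to identify when such a sample yields the permutation $\tau$. After sorting the points by $x$-coordinate, the positions split into consecutive blocks according to which segment each point came from (ordered by segment index), and, because each segment has slope one, within each block the induced pattern is the identity. Hence $\tau$ is produced if and only if the nonempty blocks form an ordered partition $\PP=(P_1,\dots,P_\ell)$ of $[k]$ into intervals on each of which $\tau$ acts as a shift --- that is, a $\tau$-compressive partition --- and the list of segment indices hosting those blocks, read in order, is a map $\psi\in\Ind(\tau/\PP,\sigma)$.

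Indecomposability of $\tau$ enters precisely to exclude the tail from this correspondence. If some sample point lay in the tail, then, since the tail is rightmost in both $x$ and $y$, the final block $P_\ell$ would sit inside the tail, forcing $\tau(\{k-|P_\ell|+1,\dots,k\})=\{k-|P_\ell|+1,\dots,k\}$; for $|P_\ell|<k$ this gives $\tau([k-|P_\ell|])=[k-|P_\ell|]$, contradicting indecomposability, while $|P_\ell|=k$ would force $\tau$ to be the identity of order $k\ge 2$, which is itself decomposable. Consequently only $\psi\in\Ind(\tau/\PP,\sigma)$ contribute, matching the summation range in the statement.

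The remainder is routine multinomial counting. For a fixed pair $(\PP,\psi)$, the probability that exactly $|P_i|$ of the $k$ labelled sample points fall in segment $\psi(i)$ for each $i\in[\ell]$ (and none fall in any other segment or in the tail) equals $\frac{k!}{\prod_{i}|P_i|!}\prod_{i}p_{\psi(i)}^{|P_i|}$, and on this event the induced permutation is almost surely $\tau$. Distinct pairs $(\PP,\psi)$ correspond to distinct segment-count vectors and hence to disjoint events, so summing over all such pairs yields the claimed identity. I expect the main obstacle to be the bookkeeping in the second paragraph, namely checking carefully that the correspondence between sample events producing $\tau$ and pairs $(\PP,\psi)$ is indeed a bijection.
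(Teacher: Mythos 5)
Your proposal is correct and follows essentially the same route as the paper's proof: both identify the event that the $k$ sampled points yield $\tau$ with a choice of a $\tau$-compressive partition $\PP$ and an occurrence $\psi\in\Ind(\tau/\PP,\sigma)$, use indecomposability to rule out the diagonal tail segment, and finish with the multinomial count $k!\prod_i p_{\psi(i)}^{|P_i|}/|P_i|!$ over disjoint events. The only difference is cosmetic: you spell out the tail-exclusion argument (which the paper only asserts parenthetically) in full detail.
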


\begin{proof}
 Consider $k$ distinct points in the support of $\Phi_{\sigma}^{\mathbf{p}}$ and label them $(x_i,y_i)$, $i\in [k]$, in such a way that $x_i<x_j$ if $i<j$ for every $i,j\in [k]$.
Let $i_1<i_2< \cdots <i_{k'}$ be the indices of the segments of the support (numbered from left to right) that contain at least one of the points and let $\PP=\{P_1,\ldots, P_{k'}\}$ be a partition of~$[k]$ such that $i\in P_j$ if the point $(x_i,y_i)$ lies on the $i_j$-th segment. 

Assume that the points yield the permutation $\tau$. Then, $\PP$ is a $\tau$-compressive partition and the subpermutation $\sigma'$ of $\sigma$ induced by $\{i_1,\ldots, i_{k'}\}$ is $\tau/\PP$. (Note that since $\tau$ is irreducible, none of the points lies on the $(k+1)$-st segment of the support of $\Phi_{\sigma}^{\mathbf{p}}$.)
The converse is also true; fix $k'$ segments with indices $1\leq i_1<\cdots<i_{k'}\leq n$ and a $\tau$-compressive partition $\mathcal{P}=\{P_1,\ldots,P_{k'}\}$ such that the subpermutation $\sigma'$ of $\sigma$ induced by $\{i_1,\ldots,i_{k'}\}$ is $\tau/\mathcal{P}$. Then any choice of points $(x_1,y_1),\ldots,(x_k,y_k)$ where each $(x_t,y_t)$ lies on the segment $i_j$ such that $t \in P_{i_j}$ yields the permutation $\tau$. 

Note that $k$ random points chosen based on the distribution $\Phi_{\sigma}^{\mathbf{p}}$ are distinct and lie in the support of $\Phi_{\sigma}^{\mathbf{p}}$ with probability one. The probability that they correspond to a given $\tau$-compressive partition $\PP$ and $\psi\in \Ind(\tau/\PP,\sigma)$ is $k!\prod_{i=1}^{|\PP|}\left( p^{|P_i|}_{\psi(i)}/|P_i|!\right)$. Since these events are disjoint for different pairs $(\mathcal{P},\psi)$, the result follows. 
\end{proof}

\begin{observation}\label{obs:t-composed}
Let $\tau$ be a non-trivial indecomposable permutation of order $k$ and let $m$ be a positive integer. Let $\Phi_1,\ldots, \Phi_m$ be permutons and let $\mathbf{x}=(x_1,\ldots,x_m)\in \mathbb{R}_{+}^m$ be such that $\sum_{i\in [m]} x_i\leq 1$. The permuton $\Phi^{\mathbf{x}}=\bigoplus_{i\in[m]}(x_i, \Phi_i)$ satisfies 
\[t(\tau,\Phi^{\mathbf{x}})=\sum_{i=1}^m x_i^{k} t(\tau,\Phi_i).\]
\end{observation}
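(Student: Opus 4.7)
The plan is to interpret $t(\tau,\Phi^{\mathbf{x}})$ as the probability that $k$ points sampled independently from $\Phi^{\mathbf{x}}$ induce the permutation $\tau$, and then use indecomposability of $\tau$ to rule out all configurations except those in which every point lands in a single block $C_i$.

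First I would decompose the sampling procedure according to the blocks defined in~\eqref{eq:c}. By construction, $\Phi^{\mathbf{x}}$ assigns mass $x_i$ to the block $C_i$ for $i\in[m]$, and mass $x_{m+1}=1-\sum_{i=1}^m x_i$ to the remaining diagonal tail corresponding to $\Phi_{m+1}=I$. For each of the $k$ points we record which block it falls into; the resulting sequence $(b_1,\ldots,b_k)\in[m+1]^k$ has probability $\prod_{j=1}^k x_{b_j}$, and with probability one the points are distinct.

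Next I would use the structure of the direct sum to argue that only the constant sequences $b_j\equiv i$ with $i\in[m]$ can yield $\tau$. Indeed, since the blocks $C_1,\ldots,C_{m+1}$ are arranged along the diagonal in this order, if two distinct blocks $C_i$ and $C_j$ with $i<j$ both contain sampled points, then every point in $C_i$ precedes every point in $C_j$ in both coordinates, so the induced permutation admits a nontrivial initial segment mapped onto itself, contradicting the indecomposability of $\tau$. The same argument rules out the tail block $i=m+1$ being used together with any other, and if all $k$ points land in the tail they induce the identity, which for $k\ge 2$ is also decomposable. So the only contributing configurations are those where all $k$ points lie in a single $C_i$ with $i\in[m]$, which happens with probability $x_i^k$.

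Finally I would condition on all $k$ points lying in $C_i$. By the definition of the direct sum, the conditional distribution of the points inside $C_i$ is the pushforward of $\Phi_i$ under $\theta_i^{-1}$, and $\theta_i$ is an increasing affine bijection in each coordinate, so it preserves the induced permutation. Hence the conditional probability of inducing $\tau$ is exactly $t(\tau,\Phi_i)$. Summing over $i$ yields
\[t(\tau,\Phi^{\mathbf{x}})=\sum_{i=1}^m x_i^k\, t(\tau,\Phi_i),\]
as required. There is no real obstacle here; the only point requiring care is the clean case analysis that rules out mixed-block configurations using indecomposability, and this is handled by the monotonicity of the block placement.
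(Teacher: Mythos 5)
Your proposal is correct and follows essentially the same route as the paper's proof: decompose according to which block $C_i$ the $k$ sampled points fall into, use indecomposability of $\tau$ to discard all mixed-block (and tail-only) configurations, and condition on the all-in-$C_i$ event, whose probability is $x_i^k$ and under which the induced pattern has the law of a $\Phi_i$-random permutation since $\theta_i$ preserves the order in both coordinates. Your write-up merely makes explicit the case analysis that the paper states in one line.
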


\begin{proof}
Observe that, if $k$ random points chosen based on the distribution $\Phi^{\mathbf{x}}$ yield an indecomposable permutation, then all the points lie in the same $C_i$, for some $i\in [m]$ (where $C_i$ is given by~\ref{eq:c} in the definition of the direct sum of permutons). The probability that all the points are in $C_i$ is $x_i^k$ since $\Phi^{\mathbf{x}}(C_i)=x_i$.
Conditioned on this event, the probability that the points yield $\tau$ is $t(\tau,\Phi_i)$. The result follows.
\end{proof}

Analogues of Observations~\ref{obs:t-step-up} and~\ref{obs:t-composed} for densities of monomorphisms also hold.

\subsection{Testing permutation parameters}

A {\em permutation parameter} $f$ is a function from $\perm$ to $\mathbb{R}$. A parameter $f$ is {\em finitely forcible} if there exists a finite family of permutations $\AA$ such that for every $\varepsilon > 0$ there exist an integer $n_0$ and a real $\delta>0$ such that if $\sigma$ and $\pi$ are permutations of order at least $n_0$ satisfying $|t(\tau,\sigma)-t(\tau,\pi)| < \delta$ for every $\tau \in\AA$, then $|f(\sigma)-f(\pi)| < \varepsilon$. The set $\AA$ is referred to as a {\em forcing family for $f$}.

A permutation parameter $f$ is {\em finitely approximable} if for every $\varepsilon > 0$ there exist $\delta>0$, an integer $n_0$ and a finite family of permutations $\AA_{\varepsilon}$ such that if $\sigma$ and $\pi$ are permutations of order at least $n_0$ satisfying $|t(\tau,\sigma)-t(\tau,\pi)| < \delta$ for every $\tau \in\AA_{\varepsilon}$, then $|f(\sigma)-f(\pi)| < \varepsilon$. 

A permutation parameter $f$ is {\em testable} if for every $\varepsilon>0$ there exist an integer $n_0$ and $\tilde{f}:S_{n_0}\rightarrow\mathbb{R}$ such that for every permutation $\sigma$ of order at least $n_0$, a randomly chosen subpermutation $\pi$ of $\sigma$ of size $n_0$ satisfies $|f(\sigma)-\tilde{f}(\pi)|<\varepsilon$ with probability at least $1-\varepsilon$. The following was given in~\cite{bib-hoppen-test}.

\begin{lemma}\label{lemma:test-app}
A bounded permutation parameter $f$ is testable if and only if it is finitely approximable.
\end{lemma}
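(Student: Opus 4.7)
The plan is to prove both directions of the equivalence separately.

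For the direction that finite approximability implies testability, I will exploit the fact that the densities $t(\tau,\sigma)$ are well estimated by sampling a sufficiently large random subpermutation. Given $\varepsilon>0$, finite approximability supplies $\delta>0$, $n_1\in\mathbb{N}$, and a finite family $\mathcal{A}_{\varepsilon}$. I pick $n_0\geq n_1$ so large that, for every permutation $\sigma$ of order at least $n_0$, a uniformly random subpermutation $\pi$ of $\sigma$ of size $n_0$ satisfies $|t(\tau,\pi)-t(\tau,\sigma)|<\delta/2$ simultaneously for every $\tau\in\mathcal{A}_{\varepsilon}$ with probability at least $1-\varepsilon$; this follows from standard concentration (e.g.\ a bounded-difference/McDiarmid argument applied to the choice of the random $n_0$-subset of $[|\sigma|]$) together with a union bound over the finite set $\mathcal{A}_{\varepsilon}$. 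I then define $\tilde f\colon S_{n_0}\to\mathbb{R}$ by setting $\tilde f(\pi)=f(\hat\pi)$ for any fixed choice of permutation $\hat\pi$ of order at least $n_1$ whose densities on $\mathcal{A}_{\varepsilon}$ are within $\delta/2$ of those of $\pi$ (or arbitrarily if none exists). When the above sampling event occurs, $\sigma$ witnesses the existence of $\hat\pi$, so by the triangle inequality $|t(\tau,\hat\pi)-t(\tau,\sigma)|<\delta$ for every $\tau\in\mathcal{A}_{\varepsilon}$, and finite approximability gives $|f(\sigma)-\tilde f(\pi)|=|f(\sigma)-f(\hat\pi)|<\varepsilon$.

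For the reverse direction, I use that the random variable $\tilde f(\pi)$, where $\pi$ is a uniformly random $n_0$-subpermutation of $\sigma$, has expectation
\[\mathbb{E}[\tilde f(\pi)]=\sum_{\tau\in S_{n_0}}t(\tau,\sigma)\,\tilde f(\tau),\]
which is a linear function of the densities $t(\tau,\sigma)$ for $\tau$ of order $n_0$. Given $\varepsilon>0$, testability with parameter $\varepsilon_1:=\varepsilon/(4(M+1))$ (where $M$ bounds $|f|$) yields $n_0$ and $\tilde f$. Because $|f|\leq M$, I may replace $\tilde f$ by its truncation to $[-M,M]$: if $|f(\sigma)-\tilde f(\pi)|<\varepsilon_1$ then the truncation is no further from $f(\sigma)$, so the testability guarantee is preserved. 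Combining the high-probability bound with the $O(M)$ trivial bound on the exceptional event yields $|\mathbb{E}[\tilde f(\pi)]-f(\sigma)|\leq \varepsilon_1+2M\varepsilon_1<\varepsilon/2$. I then take $\mathcal{A}_{\varepsilon}:=S_{n_0}$ and $\delta:=\varepsilon/(2M\cdot n_0!)$: whenever $|t(\tau,\sigma)-t(\tau,\pi)|<\delta$ for all $\tau\in\mathcal{A}_{\varepsilon}$, the linearity of the expectation gives $|\mathbb{E}[\tilde f(\pi_\sigma)]-\mathbb{E}[\tilde f(\pi_\pi)]|<\varepsilon/2$, and the triangle inequality yields $|f(\sigma)-f(\pi)|<\varepsilon$.

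The main obstacle, as I see it, is not conceptual but rather the bookkeeping of several small error terms and, in particular, the handling of an a priori unbounded $\tilde f$ in the testable-to-approximable direction, where truncation must be justified so that the expectation of $\tilde f$ on a random sample remains close to $f(\sigma)$. Everything else amounts to a standard interchange between sampling-based and density-based formulations of approximation, enabled by the fact that subpermutation densities obtained from random $n_0$-samples concentrate around the true densities at a rate depending only on $n_0$ and the finite family of test permutations.
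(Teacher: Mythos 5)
The paper does not actually prove Lemma~\ref{lemma:test-app}; it imports it from the cited reference \cite{bib-hoppen-test}, so there is no internal proof to compare against. Your argument is the standard sampling argument (and is in the same spirit as the proof in that reference), and both directions are essentially sound: for ``approximable $\Rightarrow$ testable'' you correctly use that $t(\tau,\pi)$ for a random $n_0$-point sample concentrates around $t(\tau,\sigma)$ uniformly in $\sigma$ (Azuma/McDiarmid for sampling without replacement, with Lipschitz constant $|\tau|/n_0$, plus a union bound over the finite family), and the device of defining $\tilde f(\pi)=f(\hat\pi)$ for a witness $\hat\pi$ whose densities on $\mathcal{A}_\varepsilon$ are close to those of $\pi$ works because $\sigma$ itself guarantees such a witness exists on the good event; for ``testable $\Rightarrow$ approximable'' the identity $\mathbb{E}[\tilde f(\pi)]=\sum_{\tau\in S_{n_0}}t(\tau,\sigma)\tilde f(\tau)$ and the truncation of $\tilde f$ to $[-M,M]$ (justified since projecting onto an interval containing $f(\sigma)$ cannot increase the distance) are exactly right, and boundedness of $f$ is used only here, as it should be.

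The one slip is arithmetic, in the second direction: with your choices, each of the two outer terms $|f(\sigma)-\mathbb{E}[\tilde f(\pi_\sigma)]|$ and $|f(\pi)-\mathbb{E}[\tilde f(\pi_\pi)]|$ is only guaranteed to be $<\varepsilon/2$, and the middle term $|\mathbb{E}[\tilde f(\pi_\sigma)]-\mathbb{E}[\tilde f(\pi_\pi)]|$ is also $<\varepsilon/2$, so the triangle inequality yields $|f(\sigma)-f(\pi)|<3\varepsilon/2$, not $<\varepsilon$. This is harmless because the definition is quantified over all $\varepsilon>0$, but as written the last line is false; rebalance the constants, e.g.\ take $\varepsilon_1=\varepsilon/(8(M+1))$ and $\delta=\varepsilon/\bigl(4(M+1)\,n_0!\bigr)$, which also removes the division by $M$ (your $\delta=\varepsilon/(2M\cdot n_0!)$ is ill-defined when $M=0$). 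Finally, in the first direction you should state explicitly that $n_0$ is chosen at least as large as the maximum order of a permutation in $\mathcal{A}_\varepsilon$, so that the sampled pattern determines all the relevant densities; this is implicit in your concentration step.
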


\section{Properties of the sets $T^{q}$ and $T^{q}_{\mon}$}\label{sec:Tq}

In this section, we show that densities of non-trivial indecomposable permutations are mutually independent and, more generally, that $T^{q}$ contains a ball. We start by considering the linear span of $T^{q}$.

\begin{lemma}\label{lemma:span}
For every $q\in \mathbb{N}$, ${\rm span}(T^q)=\mathbb{R}^r$, where $r$ is the number of non-trivial indecomposable permutations of order at most $q$.
\end{lemma}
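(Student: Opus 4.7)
The plan is to show, by duality, that the orthogonal complement of $T^{q}$ in $\mathbb{R}^r$ is trivial. Using the description $T^{q}=\{\ttq(\Phi):\Phi\in\permutons\}$ derived at the end of Section~\ref{sec:lim}, this reduces to the following claim: if $\mathbf{c}=(c_1,\ldots,c_r)\in\mathbb{R}^r$ satisfies $\sum_{i=1}^r c_i\,t(\tau_i,\Phi)=0$ for every permuton $\Phi$, then $\mathbf{c}=\mathbf{0}$.

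First, I would separate contributions by order. For any permuton $\Psi$ and any $x\in(0,1)$, let $\Phi^{(x)}$ be the direct sum having a single explicit component $(x,\Psi)$, so the remaining weight $1-x$ defaults to the identity permuton $I$. Since $t(\tau_i,I)=0$ for every non-trivial indecomposable $\tau_i$, Observation~\ref{obs:t-composed} gives $t(\tau_i,\Phi^{(x)})=x^{|\tau_i|}t(\tau_i,\Psi)$, so the assumption becomes $\sum_{k=2}^{q}x^{k}\sum_{i:|\tau_i|=k}c_i\,t(\tau_i,\Psi)=0$ for all $x\in(0,1)$. Reading this as a polynomial in $x$ that vanishes on an interval forces every coefficient to vanish, yielding
\[
\sum_{i:|\tau_i|=k}c_i\,t(\tau_i,\Psi)=0\quad\text{for every }k\in\{2,\ldots,q\}\text{ and every permuton }\Psi.
\]

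Next, I would isolate each $c_\tau$ individually by plugging in a step-up permuton built from $\tau$ itself. Fix $k$ and an indecomposable permutation $\tau$ of order $k$, and set $\Psi=\Phi_\tau^{\mathbf{v}}$ for $\mathbf{v}=(v_1,\ldots,v_k)$ with $v_j>0$ and $\sum_j v_j\le 1$. By Observation~\ref{obs:t-step-up}, each $t(\tau_i,\Phi_\tau^{\mathbf{v}})$ is a polynomial of total degree $k$ in $\mathbf{v}$, so the identity above is a polynomial identity on a non-empty open subset of $\mathbb{R}^k$ and therefore holds identically in $\mathbf{v}$. The decisive step is to read off the coefficient of the top-degree squarefree monomial $v_1v_2\cdots v_k$: any compressive partition $\mathcal{P}\in\simplif(\tau_i)$ with $\ell<k$ parts produces monomials $\prod_j v_{\psi(j)}^{|P_j|}$ in only $\ell<k$ distinct variables and hence cannot contribute. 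Only the singleton partition of $\tau_i$ is relevant, and it forces $\psi\in\Ind(\tau_i,\tau)$ to be the strictly increasing bijection $[k]\to[k]$, that is, the identity, which requires $\tau_i=\tau$ and contributes exactly $k!$. Therefore $k!\,c_\tau=0$, so $c_\tau=0$. Letting $\tau$ range over all indecomposable permutations of order $k$ and letting $k$ range over $\{2,\ldots,q\}$ yields $\mathbf{c}=\mathbf{0}$, as required.

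The main obstacle is the coefficient extraction at the end: one must verify carefully that no coarser compressive partition contributes to the coefficient of $v_1\cdots v_k$, which rests on the simple but essential observation that a compressive partition with fewer than $k$ parts is indexed by an injection $\psi$ touching fewer than $k$ of the variables $v_j$. A secondary point, immediate from the definition of step-up permutons, is that $\Phi_\tau^{\mathbf{v}}$ is a valid permuton for $\mathbf{v}$ in the open region $\{\mathbf{v}:v_j>0,\,\sum_j v_j\le 1\}$, which justifies passing from the polynomial identity on an open set to the identity of coefficients.
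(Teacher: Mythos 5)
Your proof is correct and follows essentially the same route as the paper: both reduce the lemma to showing that any linear relation $\sum_i c_i\,t(\tau_i,\Phi)=0$ valid for all permutons forces $\mathbf{c}=\mathbf{0}$, and both kill the coefficients by evaluating at a step-up permuton $\Phi_\tau^{\mathbf{v}}$ and extracting the coefficient of the square-free monomial $v_1\cdots v_k$ via Observation~\ref{obs:t-step-up}, noting that only the singleton compressive partition and the identity embedding can contribute. The only difference is organizational: you first decouple the orders with a direct-sum scaling argument (Observation~\ref{obs:t-composed}) and then eliminate each $c_\tau$ directly, whereas the paper picks a nonzero coefficient of maximum order (so lower-order terms cannot reach the top-degree square-free monomial) and concludes by contradiction.
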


\begin{proof}
Let $\{\tau_1, \ldots, \tau_r\}$ be the set of all non-trivial indecomposable permutations of order at most $q$. For a contradiction, suppose that ${\rm span}(T^q)$ has dimension less than $r$, i.e., there exist reals $c_1,\ldots, c_r$, not all of which are zero, such that
\begin{equation*}
\sum_{i=1}^r c_i v_i=0
\end{equation*}
for every $(v_1,\ldots, v_r)\in {\rm span}(T^q)$. Therefore,
\begin{equation*}
\sum_{i=1}^r c_i t(\tau_i,\Phi)=0
\end{equation*}
 for every permuton $\Phi\in \permutons$.

Consider the permutations $\tau_i$ such that $c_i\not=0$. Among these pick a $\tau_k$ of maximum order. Observation~\ref{obs:t-step-up} yields that the following holds for $s=|\tau_k|$ and every $\mathbf{x}=(x_1,\ldots, x_s)\in \mathbb{R}_+^s$ such that $\sum_{i=1}^s x_i\leq 1$:
  
\[\sum_{i=1}^r c_i t(\tau_i,\Phi^{\mathbf{x}}_{\tau_k})=
 \sum_{i=1}^r c_i |\tau_i|!\sum_{\PP\in\simplif(\tau_i)}\sum_{\psi\in \Ind(\tau_i/\PP,\tau_k)}\prod_{j=1}^{|\PP|}\frac{x_{\psi(j)}^{|P_j|}}{|P_j|!}= p(x_1, \ldots, x_s),\]
where $p$ is a polynomial. We now argue that $p$ is a polynomial of degree $s$ (and therefore it is a non-zero polynomial). Clearly, the polynomial $p$ has degree at most $s$. Since $\Ind(\tau',\tau_k)=\emptyset$ for every $\tau'$ of order $s$ such that $\tau'\neq \tau_k$, 
$c_k s! x_1x_2\cdots x_s$ is the only term of $p$ containing the
monomial $x_1x_2\cdots x_s$ with nonzero coefficient. Therefore, there exists
$\mathbf{x}$ such that $\sum_{i=1}^r c_i t(\tau_i,\Phi^{\mathbf{x}}_{\tau_k})\neq 0$, which is a contradiction. 
\end{proof}

Now, we will prove the main result of this section, Theorem~\ref{thm:ball}. It shows that the interior of $T^q$ is non-empty. Observation~\ref{obs:lin-transf} yields the same conclusion for $T_{\mon}^q$. 

\begin{proof}[Proof of Theorem~\ref{thm:ball}]
Let $\{\tau_1,\ldots, \tau_r\}$ be the set of all non-trivial indecomposable permutations of order at most $q$ and let $\Phi_1,\ldots, \Phi_r$ be permutons such that $\{\ttq(\Phi_i)
\mid i=1,\ldots, r\}$ spans $\mathbb{R}^r$. Consider the matrix $V=(v_{i,j})_{i,j=1}^{r}$, where $v_{i,j}=t(\tau_j,\Phi_i)$. Observe that the matrix $V$ is non-singular.

Consider a vector $\mathbf{x}=(x_1,\ldots,x_r)\in (0,r^{-1})^r$ and let $\Phi^{\mathbf{x}}=\bigoplus_{i\in[r]}(x_i, \Phi_i)$. 
By Observation~\ref{obs:t-composed}, we have 
\[t(\tau_j,\Phi^{\mathbf{x}})=\sum_{i=1}^r x_i^{|\tau_j|} t(\tau_j,\Phi_i)=\sum_{i=1}^t x_i^{|\tau_j|} v_{i,j}.\]

Let $\Psi$ be a map from $\mathbb{R}^r$ to $\mathbb{R}^r$ such that 
\[\Psi_j(\mathbf{x})=\sum_{i=1}^r x_i^{|\tau_j|} v_{i,j} \mbox{ for all }j\in[r].\]
Since we have $\Psi(\mathbf{x})=\ttq(\Phi^{\mathbf{x}})$, we get that
\[\Psi((0,r^{-1})^r)=\{\Psi(\mathbf{x})\mid \mathbf{x}\in (0,r^{-1})^r\}\subseteq T^{q}.\]
The Jacobian $\Jac(\Psi)(\mathbf{x})$ is a polynomial in $x_1,\ldots, x_r$. Since for $x_1=\cdots=x_r=1$ we have
\[\Jac(\Psi)=\det (v_{i,j}\cdot |\tau_j|)_{i,j=1}^{r} =\left(\prod_{j=1}^r|\tau_j|\right)\det V\neq 0,\]
$\Jac(\Psi)$ is a non-zero polynomial.

Hence, there exists $\mathbf{x}\in (0,r^{-1})^r$ for which $\Jac(\Psi)(\mathbf{x})\neq 0$. Consequently, $T^{q}$ contains a ball around $\mathbf{w}$ for $\mathbf{w}=\Psi(\mathbf{x})$.
\end{proof}

Theorem~\ref{thm:ball} implies that for every finite family $\AA$ of indecomposable permutations, there exist permutons $\Phi$ and $\Phi'$  and an indecomposable permutation $\tau$ such that $t(\pi, \Phi)= t(\pi, \Phi')$ for every $\pi\in \AA$ and $t(\tau, \Phi)\neq t(\tau, \Phi')$. The following lemma shows that an analogous statement holds for any finite family of permutations, not only for indecomposable permutations.

\begin{lemma}\label{lemma:indep}
For every finite set of permutations $\AA=\{\tau_1, \ldots, \tau_k\}$, there exists a permutation $\tau$ and permutons $\Phi$ and $\Phi'$
such that $t(\tau_i, \Phi)= t(\tau_i, \Phi')$ for every $i\in [k]$ and $t(\tau, \Phi)\neq t(\tau, \Phi')$.
\end{lemma}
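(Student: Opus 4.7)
The plan is to apply the polynomial family of permutons $\Phi^{\mathbf{x}}=\bigoplus_{i\in[r]}(x_i,\Phi_i)$ constructed in the proof of Theorem~\ref{thm:ball}, but to work with a value of $q$ that is large enough to make the indecomposable density map take values in a space of strictly higher dimension than the $\AA$-density map. This forces a dimensional imbalance that produces the required $\Phi$, $\Phi'$ and $\tau$.

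First I would choose $q$ so large that the number $r$ of non-trivial indecomposable permutations $\mu_1,\ldots,\mu_r$ of order at most $q$ satisfies $r>k$; this is possible because~\eqref{eq:conn} implies that the number of indecomposable permutations of any given order $n$ is at least $(1-o(1))n!$, so $r\to\infty$ as $q\to\infty$. Next, following the proof of Theorem~\ref{thm:ball}, I would pick permutons $\Phi_1,\ldots,\Phi_r$ such that $\{\ttq(\Phi_i)\}_{i=1}^{r}$ is a basis of $\mathbb{R}^r$, set $U:=(0,r^{-1})^r$, and define $\Phi^{\mathbf{x}}:=\bigoplus_{i\in[r]}(x_i,\Phi_i)$ for $\mathbf{x}\in U$. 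Every density $t(\pi,\Phi^{\mathbf{x}})$ is then a polynomial in $\mathbf{x}$.

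I would then introduce two polynomial maps on $U$: the $\AA$-density map $F(\mathbf{x}):=(t(\tau_1,\Phi^{\mathbf{x}}),\ldots,t(\tau_k,\Phi^{\mathbf{x}}))\in\mathbb{R}^k$, and the indecomposable density map $\Psi(\mathbf{x}):=\ttq(\Phi^{\mathbf{x}})\in\mathbb{R}^r$. The Jacobian computation from the proof of Theorem~\ref{thm:ball} gives that $\Jac(\Psi)$ is a non-zero polynomial, so at any $\mathbf{x}_0\in U$ where it does not vanish the inverse function theorem yields an open neighborhood $V\subseteq U$ of $\mathbf{x}_0$ on which $\Psi$ is a diffeomorphism, and in particular injective. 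Since $F|_V$ is a continuous map from an open $r$-dimensional set into $\mathbb{R}^k$ with $k<r$, it cannot be injective, so there exist $\mathbf{x}\neq\mathbf{x}'$ in $V$ with $F(\mathbf{x})=F(\mathbf{x}')$. Setting $\Phi:=\Phi^{\mathbf{x}}$, $\Phi':=\Phi^{\mathbf{x}'}$ gives $t(\tau_i,\Phi)=t(\tau_i,\Phi')$ for every $i\in[k]$; and since $\Psi$ is injective on $V$ we obtain $\Psi(\mathbf{x})\neq\Psi(\mathbf{x}')$, so some $\mu_j$ satisfies $t(\mu_j,\Phi)\neq t(\mu_j,\Phi')$ and $\tau:=\mu_j$ is the desired permutation.

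The main (and essentially the only) technical point is the non-injectivity of $F|_V$, i.e.\ the fact that a continuous map from an open subset of $\mathbb{R}^r$ into $\mathbb{R}^k$ with $k<r$ cannot be injective. This follows from invariance of domain (restricting to a compact set with non-empty interior, an injection would embed a homeomorphic copy of an $r$-dimensional ball into $\mathbb{R}^k$, contradicting invariance of dimension) or alternatively from a generic-fiber dimension count via Sard's theorem, which gives $\dim F^{-1}(F(\mathbf{x}))\ge r-k\ge 1$ at a generic $\mathbf{x}\in V$. Everything else is a direct reuse of the construction and Jacobian calculation already established for Theorem~\ref{thm:ball}.
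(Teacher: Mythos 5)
Your proof is correct, but it takes a genuinely different route from the paper's. The paper proves Lemma~\ref{lemma:indep} by an explicit construction: it picks $k+1$ indecomposable \emph{thorough} permutations of large order $n$, forms the direct sum of their step-up permutons with weights $\mathbf{u}\in(0,1/(k+1)]^{k+1}$, applies the Borsuk--Ulam theorem to the map recording the $k$ densities of the $\tau_i$ on a $k$-sphere in the weight space to obtain two weight vectors with equal $\AA$-densities, and then distinguishes the two permutons by the density of one thorough permutation, which by Observations~\ref{obs:t-step-up} and~\ref{obs:t-composed} is the explicit strictly monotone function $n!(u_i/n)^n$ of its weight. You instead recycle the machinery of Theorem~\ref{thm:ball}: a basis of densities coming from Lemma~\ref{lemma:span}, the polynomial map $\Psi$ with non-vanishing Jacobian, the inverse function theorem to make $\Psi$ locally injective, and then invariance of domain to force the $k$-dimensional map $F$ to collide on that neighbourhood, with local injectivity of $\Psi$ supplying the distinguishing pattern. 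Both arguments rest on the same principle that a continuous map from an open $r$-dimensional set into $\mathbb{R}^k$ with $k<r$ cannot be injective, realized via Borsuk--Ulam in the paper and via invariance of domain in your write-up; the paper's version buys an explicit witness $\tau$ and needs only continuity of the relevant densities, while yours buys the extra information that $\tau$ can be taken indecomposable of order at most $q$ as soon as $r>k$, at the price of importing the analytic setup of Theorem~\ref{thm:ball}. Two minor points: your claim that $t(\pi,\Phi^{\mathbf{x}})$ is a polynomial in $\mathbf{x}$ for \emph{arbitrary} (possibly decomposable) $\pi$ is not covered by Observation~\ref{obs:t-composed} and deserves a line (condition on how the sampled points distribute over the diagonal blocks; in any case continuity suffices, and the paper makes the same unproved continuity assertion in its own proof), and your alternative Sard-type justification of non-injectivity is sketchier than the invariance-of-domain argument, which is the one to rely on.
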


\begin{proof}
Let $\BB=\{\pi_1, \ldots, \pi_{k+1}\}$ be a family of indecomposable permutations each of order $n$ with $n>|\tau_i|$ for every $i\in [k]$, such that for every $\pi_j\in \BB$, there is no $\ell<n$ satisfying $\pi_{j}(\ell+1)=\pi_{j}(\ell)+1$. We call permutations with this property {\em thorough}.
By~(\ref{eq:conn}) in Section~\ref{sec:basics} a random permutation of order $n$ is indecomposable with probability tending to one as $n$ tends to infinity. Moreover, by~(\ref{eq:simple}) in Section~\ref{sec:basics} such permutations are thorough with probability bounded away from zero, because every simple permutation is thorough. Therefore, a family $\BB$ of $k+1$ indecomposable thorough permutations exists for $n$ sufficiently large. 

Let $\Phi^{\mathbf{u}}=\bigoplus_{i\in[k+1]}(u_i, \Phi_{\pi_i}^{\mathbf{n}})$ for $\mathbf{u}=(u_1,\ldots, u_{k+1})\in (0,\frac{1}{k+1}]^{k+1}$ where  $\mathbf{n}=(\underbrace{1/n,\ldots,1/n}_{n\times})$.

Observe that for a thorough permutation $\pi$, the partition into singletons is the only $\pi$-compressive partition. Hence, by Observations~\ref{obs:t-step-up} and~\ref{obs:t-composed}, $t(\pi_i,\Phi^{\mathbf{u}})=n!(u_i/n)^n$ for every $i\in [k+1]$.
For every $j\in [k]$, the function $\mathbf{u}\mapsto t(\tau_j,\Phi^{\mathbf{u}})$ is continuous for every $j\in [k]$. We consider the continuous map $\Gamma$ from $(0,1/(k+1)]^{k+1}$ to $\mathbb{R}^k$ such that
\[\Gamma(\mathbf{u})=(t(\tau_1,\Phi^{\mathbf{u}}),\ldots,t(\tau_k,\Phi^{\mathbf{u}})).\]

Now, consider any $k$-dimensional sphere in $(0,1/(k+1)]^{k+1}$. The Borsuk-Ulam Theorem~\cite{bib-borsuk} yields the existence of two distinct points on its surface that are mapped by $\Gamma$ to the same point in $[0,1]^k$. Hence, there exist distinct $\mathbf{v}=(v_1,\ldots, v_{k+1})$ and $\mathbf{v}'=(v'_1,\ldots, v'_{k+1})$ such that $t(\tau_j,\Phi^{\mathbf{v}})=t(\tau_j,\Phi^{\mathbf{v}'})$ for every $j\in [k]$. However, if, say $v_i \neq v'_i$, then $t(\pi_i,\Phi^{\mathbf{v}})=n!(v_i/n)^{n}\neq n!(v'_i/n)^{n} = t(\pi_i,\Phi^{\mathbf{v}'})$. Therefore, we may take $\tau=\pi_i$, $\Phi=\Phi^{\mathbf{v}}$, and $\Phi'=\Phi^{\mathbf{v}'}$.
\end{proof}

\section{Non-forcible approximable parameter} 

For this section, we fix a sequence $(\tau_i)_{i\in\mathbb{N}}$ of
permutations of strictly increasing orders that satisfies the following:  For every $k>1$, there exist permutons $\Phi_k$ and $\Phi'_k$ such that $t(\sigma, \Phi_k)= t(\sigma, \Phi'_k)$ for every permutation $\sigma$ of order at most $|\tau_{k-1}|$, and $t(\tau_k, \Phi_k)>t(\tau_k, \Phi'_k)$. Such a sequence $(\tau_i)_{i\in\mathbb{N}}$ exists by Lemma~\ref{lemma:indep}. We fix such $\Phi_k$ and $\Phi'_k$ for all $k\in\mathbb{N}$ for the rest of this section. Let $\gamma_k=t(\tau_k, \Phi_k)-t(\tau_k, \Phi'_k)$ for every $k\in\mathbb{N}$.

Let $(\alpha_i)_{i\in \mathbb{N}}$ be a sequence of positive reals satisfying $\sum_{i\in \mathbb{N}} \alpha_i <1/2$ and $\sum_{i>k} \alpha_i < \alpha_k\gamma_k/4$ for every $k$.
The main result of this section is that the permutation parameter 
\[\ff(\sigma)=\sum_{i\in \mathbb{N}} \alpha_i t(\tau_i,\sigma)\]
is finitely approximable but not finitely forcible.

\begin{lemma}\label{lemma:approx}
 The permutation parameter $\ff$ is finitely approximable.
\end{lemma}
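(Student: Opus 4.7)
The plan is to exploit the absolute convergence of the series defining $\ff$ and approximate it by a finite truncation. Given $\varepsilon>0$, I would first choose $K=K(\varepsilon)$ large enough that the tail of the weights is small, specifically $\sum_{i>K}\alpha_i<\varepsilon/2$, which is possible since $\sum_{i\in\mathbb{N}}\alpha_i<1/2$ converges. Then I take the finite forcing family for this $\varepsilon$ to be $\AA_{\varepsilon}=\{\tau_1,\ldots,\tau_K\}$, set $n_0=|\tau_K|$ (so that subpermutation densities of elements of $\AA_{\varepsilon}$ are not automatically zero), and let $\delta=\varepsilon$.

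The verification is then a one-line split of the linear expression for $\ff(\sigma)-\ff(\pi)$. For any permutations $\sigma,\pi$ of order at least $n_0$ satisfying $|t(\tau_i,\sigma)-t(\tau_i,\pi)|<\delta$ for every $i\in[K]$, write
\[
|\ff(\sigma)-\ff(\pi)|\le\sum_{i=1}^{K}\alpha_i\,|t(\tau_i,\sigma)-t(\tau_i,\pi)|+\sum_{i>K}\alpha_i\,|t(\tau_i,\sigma)-t(\tau_i,\pi)|.
\]
For the head, use the hypothesis together with $\sum_{i=1}^{K}\alpha_i<1/2$ to bound it by $\delta/2<\varepsilon/2$. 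For the tail, use the trivial estimate $|t(\tau_i,\sigma)-t(\tau_i,\pi)|\le 1$, which is bounded by $\sum_{i>K}\alpha_i<\varepsilon/2$. Summing gives $|\ff(\sigma)-\ff(\pi)|<\varepsilon$, matching the definition of finite approximability.

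There is no real obstacle here; the argument is purely a quantitative manifestation of the fact that $\ff$ is (a uniformly convergent limit of) finite $\mathbb{R}$-linear combinations of density functions $\sigma\mapsto t(\tau_i,\sigma)$, each of which is trivially finitely approximable. The role of the assumption $\sum_{i>k}\alpha_i<\alpha_k\gamma_k/4$ is not needed for approximability and will only be used in the companion lemma asserting that $\ff$ is \emph{not} finitely forcible; here only the summability $\sum_i\alpha_i<\infty$ is used.
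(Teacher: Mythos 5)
Your proposal is correct and follows essentially the same argument as the paper: truncate the series at an index $K$ with $\sum_{i>K}\alpha_i<\varepsilon/2$, take $\AA_\varepsilon=\{\tau_1,\ldots,\tau_K\}$ and $\delta=\varepsilon$, bound the head by $\delta\sum_i\alpha_i<\varepsilon/2$ using $\sum_i\alpha_i<1/2$, and bound the tail trivially by $\sum_{i>K}\alpha_i<\varepsilon/2$. Your added remarks about $n_0$ and about which hypothesis on $(\alpha_i)$ is actually used are accurate but do not change the substance.
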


\begin{proof}
Let $\varepsilon>0$ be given. Since the sum $\sum_{i\in \mathbb{N}} \alpha_i$ converges, there exists $k$ such that $\sum_{i> k} \alpha_i <\varepsilon/2$.
Set $\AA=\{\tau_1, \ldots, \tau_{k}\}$ and $\delta=\varepsilon$.
Consider two permutations $\sigma$ and $\pi$ that satisfy $|t(\tau,\sigma)-t(\tau,\pi)|<\delta$ for every $\tau\in \AA$.
We obtain that
\begin{align*}
|\ff(\sigma)-\ff(\pi)|&= \left|\sum_{i\in \mathbb{N}} \alpha_i (t(\tau_i,\sigma)-t(\tau_i, \pi))\right|\\
&\leq \sum_{i\in \mathbb{N}} \alpha_i \left|t(\tau_i,\sigma)-t(\tau_i, \pi)\right|\\
&< \sum_{i\leq k} \alpha_i \delta + \sum_{i> k} \alpha_i|t(\tau_i,\sigma)-t(\tau_i, \pi)|\\
&< \delta/2 + \sum_{i> k} \alpha_i \cdot 1< \varepsilon.
\end{align*}
It follows that the parameter $\ff$ is finitely approximable.
\end{proof}

In the following lemma, we show that $\ff$ is not finitely forcible.

\begin{lemma}\label{lemma:non-forc}
 The permutation parameter $\ff$ is not finitely forcible.
\end{lemma}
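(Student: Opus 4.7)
The plan is to argue by contradiction, using the pairs $(\Phi_k,\Phi'_k)$ as witnesses that densities at high orders cannot be detected by a fixed finite list of low-order tests. Suppose $\ff$ is finitely forcible with some forcing family $\AA$. Since $\AA$ is finite and the orders $|\tau_i|$ are strictly increasing, I pick $k$ so large that $|\tau|\le|\tau_{k-1}|$ for every $\tau\in\AA$. By the defining property of the sequence $(\tau_i)_{i\in\NN}$, the permutons $\Phi_k$ and $\Phi'_k$ then satisfy $t(\tau,\Phi_k)=t(\tau,\Phi'_k)$ for every $\tau\in\AA\cup\{\tau_1,\dots,\tau_{k-1}\}$, while $t(\tau_k,\Phi_k)-t(\tau_k,\Phi'_k)=\gamma_k>0$.

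Set $\varepsilon=\alpha_k\gamma_k/2$. Finite forcibility then supplies $n_0$ and $\delta>0$ such that any two permutations of order at least $n_0$ that are $\delta$-close on $\AA$ have $\ff$-values within $\varepsilon$. Using the approximation fact from Section~\ref{sec:lim} (sampling $\Phi$-random permutations of arbitrarily large order approximates $\Phi$ on any prescribed finite family of test permutations to within any prescribed tolerance), I would, for a small $\eta>0$ to be fixed below, pick permutations $\sigma$ and $\pi$ of order at least $n_0$ such that $|t(\tau,\sigma)-t(\tau,\Phi_k)|<\eta$ and $|t(\tau,\pi)-t(\tau,\Phi'_k)|<\eta$ for every $\tau\in\AA\cup\{\tau_1,\dots,\tau_k\}$. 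The triangle inequality together with the equalities $t(\tau,\Phi_k)=t(\tau,\Phi'_k)$ then yields $|t(\tau,\sigma)-t(\tau,\pi)|<2\eta$ for every $\tau\in\AA\cup\{\tau_1,\dots,\tau_{k-1}\}$ and $|t(\tau_k,\sigma)-t(\tau_k,\pi)|\ge\gamma_k-2\eta$.

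Expanding $\ff(\sigma)-\ff(\pi)$ and splitting the sum into the ranges $i<k$, $i=k$, and $i>k$, and using the crude bound $|t(\tau_i,\sigma)-t(\tau_i,\pi)|\le 1$ for $i>k$, yields
\[|\ff(\sigma)-\ff(\pi)|\;\ge\;\alpha_k(\gamma_k-2\eta)\;-\;2\eta\sum_{i<k}\alpha_i\;-\;\sum_{i>k}\alpha_i.\]
The tail condition $\sum_{i>k}\alpha_i<\alpha_k\gamma_k/4$ from the setup, together with a choice of $\eta$ small enough that $2\eta\sum_{i\le k}\alpha_i<\alpha_k\gamma_k/4$ and $\eta<\delta/2$, makes the right-hand side exceed $\alpha_k\gamma_k/2=\varepsilon$, while $\sigma$ and $\pi$ remain $\delta$-close on $\AA$. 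This contradicts finite forcibility. The only real obstacle is the bookkeeping needed to ensure that the ``signal'' $\alpha_k\gamma_k$ is not drowned by the two competing errors — the approximation slack on the finitely many tested coordinates (handled by $\eta$) and the untested tail beyond $\tau_k$ (handled precisely by the built-in condition $\sum_{i>k}\alpha_i<\alpha_k\gamma_k/4$).
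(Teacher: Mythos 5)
Your proposal is correct and follows essentially the same route as the paper: assume a forcing family $\AA$, pick $k$ with $|\tau_{k-1}|$ exceeding all orders in $\AA$, approximate $\Phi_k$ and $\Phi'_k$ by large permutations agreeing closely on $\AA\cup\{\tau_1,\dots,\tau_{k-1}\}$ but differing by nearly $\gamma_k$ on $\tau_k$, and use $\sum_{i>k}\alpha_i<\alpha_k\gamma_k/4$ to keep the tail from drowning the signal. The only differences are cosmetic bookkeeping (your separate tolerance $\eta$ and $\varepsilon=\alpha_k\gamma_k/2$ versus the paper's $\delta<\varepsilon=\alpha_k\gamma_k/4$), so no further comment is needed.
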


\begin{proof}
Suppose that $\ff$ is finitely forcible and that $\AA$ is a forcing family for $\ff$. Let $\tau_i,\gamma_i, \Phi_i$ and $\Phi'_i$ be as in the definition of $\ff$ and let $k$ be such that maximum order of a permutation in $\AA$ is at most $|\tau_{k-1}|$.
We have $t(\rho, \Phi_k)=t(\rho, \Phi'_k)$ for every $\rho\in \AA$, $t(\tau_i, \Phi_k)=t(\tau_i, \Phi'_k)$ for every $i<k$, and $t(\tau_{k}, \Phi_k)- t(\tau_{k}, \Phi'_k)=\gamma_{k}$. 

Let $\varepsilon=\alpha_{k}\gamma_k/4$. Let $\delta>0$ be as in the definition of finite
forcibility of $\ff$. Without loss of generality we may assume that $\delta<\varepsilon$. 

There exist a $\Phi_k$-random permutation $\sigma$ and a $\Phi'_k$-random permutation $\sigma'$ such that $|t(\rho,\sigma)-t(\rho,\sigma')| < \delta$ for every $\rho \in\AA$, $|t(\tau_i,\sigma)-t(\tau_i,\sigma')| < \delta$ for every $i<k$ and $t(\tau_{k}, \sigma)- t(\tau_{k}, \sigma')>\gamma_{k}-\delta>3\gamma_{k}/4$.
Let us estimate the sum in the definition of $\ff$ with the $k$-th term missing. 
\begin{align*}
\left|\sum_{i\in \mathbb{N},i\neq k} \alpha_i (t(\tau_i,\sigma)-t(\tau_i, \sigma'))\right|\\
&\hspace{-6.5em}=\left|\sum_{i< k} \alpha_i \left(t(\tau_i,\sigma)-t(\tau_i, \sigma')\right)+\sum_{i>k} \alpha_i \left(t(\tau_i,\sigma)-t(\tau_i, \sigma')\right)\right|\\
&\hspace{-6.5em}<\sum_{i<k}\alpha_i\delta + \sum_{i>k}\alpha_i<\frac{\alpha_k\gamma_{k}}{8}+\frac{\alpha_k\gamma_{k}}{4}<\frac{\alpha_k\gamma_{k}}{2}
\end{align*}
This leads to the following 
\begin{align*}
|\ff(\sigma)-\ff(\sigma')|&= \left|\sum_{i\in \mathbb{N}} \alpha_i \left(t(\tau_i,\sigma)-t(\tau_i, \sigma')\right)\right|\\
&\geq  \alpha_k \left(t(\tau_{k}, \sigma)- t(\tau_{k}, \sigma')\right) - \left|\sum_{i\in \mathbb{N},i\neq k} \alpha_i \left(t(\tau_i,\sigma)-t(\tau_i, \sigma')\right)\right|\\
&> \frac{3}{4} \alpha_{k}\gamma_{k}-\frac{\alpha_k\gamma_{k}}{2}=\frac{\alpha_k\gamma_{k}}{4}=\varepsilon.
\end{align*}
This contradicts our assumption that $\ff$ is finitely forcible.
\end{proof}

Lemmas~\ref{lemma:approx} and~\ref{lemma:non-forc} imply Theorem~\ref{thm:main}. Recall that, by Lemma~\ref{lemma:test-app} the testable bounded permutation parameters are precisely the finitely approximable ones.

\section*{Acknowledgements}

This work has received funding from the European Research Council (ERC) under the European Union’s Horizon 2020 research and innovation programme (grant agreement No 648509) and under the European Union's Seventh Framework Programme (FP7/2007-2013)/ERC grant agreement no.~259385. Glebov also acknowledges the support of the ERC grant ``High-dimensional combinatorics''. Hoppen acknowledges the support of FAPERGS~(Proc.\,2233-2551/14-0), CNPq (Proc.~448754/2014-2 and~308539/2015-0) and FAPESP (Proc. 2013/03447-6). Hoppen and Kohayakawa acknowledge the support of the University of S\~ao Paulo, through NUMEC/USP (Project MaCLinC/USP).  Kohayakawa was partially supported by FAPESP (2013/03447-6,
2013/07699-0), CNPq (459335/2014-6, 310974/2013-5) and the NSF (DMS
1102086).

This work was done while Glebov and Klimo\v{s}ov\'a were at Mathematics Institute and DIMAP, University of Warwick, Coventry CV4 7AL, UK and Liu was at Department of Mathematical Sciences, University of Illinois at Urbana-Champaign, Urbana, Illinois 61801, USA.

\bibliography{biblio}

\end{document}